\def\smallddots{\mathinner{\raise7pt\hbox{.}\raise4pt\hbox{.}\raise1pt\hbox{.}}}
\def\smallsdots{\mathinner{\raise1pt\hbox{.}\raise4pt\hbox{.}\raise7pt\hbox{.}}}
\numberwithin{equation}{section}
\numberwithin{table}{section}
\newtheorem{theorem}{Theorem}[section]
\newtheorem{example}{Example}[section]
\newtheorem{definition}{Definition}[section]
\newtheorem{remark}{Remark}[section]
\newcommand*{\ZZ}{\ensuremath{\mathbb{Z}}}
\newcommand{\fgemm}{\texttt{fgemm}\xspace}
\newcommand{\pfgemm}{\texttt{pfgemm}\xspace}
\newcommand{\dgemm}{\texttt{dgemm}\xspace}
\newcommand{\sgemm}{\texttt{sgemm}\xspace}
\newcommand{\openblas}{\texttt{OpenBLAS}\xspace}
\newcommand{\plasmaquark}{\texttt{Plasma-Quark}\xspace}
\newcommand{\linbox}{\texttt{LinBox}\xspace}
\newcommand{\fflas}{\texttt{FFLAS-FFpack}\xspace}
\newcommand{\openmp}{\texttt{OpenMP}\xspace}
\newcommand{\strassenwinograd}{WRB-MM\xspace}
\newcommand{\OpenDreamKit}{the \href{http://opendreamkit.org}{OpenDreamKit} \href{https://ec.europa.eu/programmes/horizon2020/}{Horizon 2020} \href{https://ec.europa.eu/programmes/horizon2020/en/h2020-section/european-research-infrastructures-including-e-infrastructures}{European Research Infrastructures} project (\#\href{http://cordis.europa.eu/project/rcn/198334_en.html}{676541})}
\newcommand{\F}{\ensuremath{{\mathbb F}}}
\newcommand{\BigO}[1]{\ensuremath{\mathcal{O}\left(#1\right)}}
\begin{document}

\centerline{{\Large \bf Fast Matrix Multiplication}}
\centerline{{\Large \bf and Symbolic Computation}}  


\medskip

\medskip

\centerline{Jean-Guillaume Dumas$^{[a]}$ and Victor Y. Pan$^{[b]}$}

\medskip
\medskip

\centerline{$^{[a]}$~Laboratoire Jean Kuntzmann}
\centerline{Applied Mathematics and Computer Science}
\centerline{Universit\'e Grenoble Alpes}
\centerline{700 avenue centrale, IMAG - CS 40700}
\centerline{38058 Grenoble cedex 9, FRANCE}
\centerline{\href{mailto:Jean-Guillaume.Dumas@imag.fr}{Jean-Guillaume.Dumas@imag.fr}}
\centerline{\href{http://ljk.imag.fr/membres/Jean-Guillaume.Dumas/}{ljk.imag.fr/membres/Jean-Guillaume.Dumas}\footnote{Partly
    supported by \OpenDreamKit.}}

\medskip
\medskip

\centerline{$^{[b]}$~Department of Mathematics and Computer Science}

\centerline{Lehman College of the City University of New York}

\centerline{Bronx, NY 10468 USA}

\centerline{and}
\centerline{Ph.D. Programs in Mathematics  and Computer Science}
\centerline{The Graduate Center of the City University of New York}
\centerline{New York, NY 10036 USA}
\centerline{\href{mailto:victor.pan@lehman.cuny.edu}{victor.pan@lehman.cuny.edu}}
\centerline{\href{http://comet.lehman.cuny.edu/vpan}{comet.lehman.cuny.edu/vpan}\footnote{Supported
    by NSF Grants CCF--1116736 and CCF--1563942 and PSC CUNY Award
    68862--00 46.}}

\begin{abstract}
The complexity of matrix multiplication (hereafter MM) 
 has been intensively  studied since 1969,
when 
 Strassen surprisingly decreased the 
exponent 3 in the cubic cost of the  straightforward classical MM 
 to $\log_2(7)\approx 2.8074$.
Applications to some  fundamental problems of Linear Algebra and Computer Science
have been immediately recognized, but  the
researchers in Computer Algebra keep discovering more and more
applications 
even today, with
 no sign of slowdown.
We survey the unfinished history of decreasing the exponent
towards its information lower bound 2,
recall some important techniques discovered in this process
and linked to other fields of computing,
reveal sample surprising applications
to fast computation of the inner products
of two vectors
and summation of integers,
 and
discuss the {\em curse of recursion}, which separates the progress in fast MM into 
its most acclaimed and purely theoretical part and into  
 valuable acceleration of MM of feasible  sizes.
Then, in the second part of our paper, we cover fast MM
  in realistic symbolic computations and
discuss 
applications
and implementation of fast
\emph{exact} matrix multiplication. 
We first review how most 
of exact linear algebra can be reduced to matrix
multiplication over small finite fields. Then we highlight the differences in
the design of approximate and exact implementations of fast MM, taking into account
nowadays processor and memory hierarchies. In the concluding section we comment
on current perspectives of the study  of fast MM.
\end{abstract}

\paragraph{\bf 2000 Math. Subject Classification:}
68Q25, 65F05, 15A06,
15A69, 01A60, 15-03

\paragraph{\bf Key Words:}
Matrix multiplication (MM),
Computations, 
Complexity,
Linear algebra,
Tensor decomposition,
Multilinear algebra,
Inner product,
Summation,
Binary Segmentation,
Exact linear algebra,
Small finite fields,
Processors, 
Memory hierarchy,
Impacts of fast MM,
Numerical implementation,
Perspectives

\section{Introduction}\label{sintr}

\subsection{Our subjects}\label{ssbj}

 Matrix multiplication 
(hereafter we keep using the acronym {\em MM}) is fundamentally important 
for symbolic and numerical computations in linear algebra and for
the theory of computing.    
Efficient performance of MM
depends on various factors, 
particularly on vectorization, data locality, and arithmetic cost
(cf.\ \cite[Chapter 1]{GL13}).

In the first part of the paper (Sections \ref{sst69}--\ref{sapplcs}) we review  the work on the decrease of 
the arithmetic cost, including purely theoretical study of MM of immense sizes
(so far this part of the study has been most acclaimed and most generously supported!),  but we focus on
feasible MM.

In our longest Section \ref{sec:exactcomp} 
we discuss 
realistic acceleration of 
symbolic MM, taking into account
nowadays processor and memory hierarchies.

In our concluding Section \ref{sprimp} we comment
on 
current perspectives of the study  of fast MM.

\subsection{History of fast MM and its impacts}\label{shstr}

The cubic arithmetic time $2n^3-n^2$ of 
the straightforward algorithm for $MM(n)$, that is,
for $n\times n$ MM,
was commonly believed to be optimal until 1969, when Strassen's algorithm
 of \cite{S69}
performed $MM(n)$ in $O(n^{\omega})$ time for $\omega=\log_2(7)\approx 2.8074$.
This implied the  exponent   
 $\log_2(7)$ also for numerous  venerated computational problems
in 
Computer Science, Linear Algebra,  and Computer  Algebra
such as Boolean MM, 
parsing
context-free grammars, computing paths and distances in graphs,
 the solution of a nonsingular 
linear system of  equations, 
 computation of the inverse and the determinant
of a matrix, and
its various factorizations 
 (see more in Section~\ref{sapplcs}).
The worldwide interest to MM has
immediately exploded,\footnote{For the scientific world the news came as a
miracle from the blue. Most of the readers were particularly impressed by the power of 
the divide and conquer method (not novel in 1969) rather than by Strassen's ingenious
algorithm for $2\times 2$ MM, and many scientists, although not experts like Strassen,
ignored or overlooked a minor but meaningful earlier acceleration of the  straightforward
MM that saved about 50\% of its scalar multiplications (see Example~\ref{exw0} in Section~\ref{sst69}).}
and it was widely expected that  
new efficient algorithms would soon perform MM and solve 
the related computational problems in nearly
quadratic time. Even the exponent 2.8074, however, defied
the attacks of literally all experts around the globe for almost a decade, until 
1978, when the algorithm of \cite{P78} 
broke Strassen's record, improving the algorithm of \cite{S69}
already at the level of feasible MM.
 
 The mainstream research
responded to that breakthrough
by directing all effort to the decrease of
 the exponent of MM of
 unrestricted sizes 
and very soon succeeded 
in dramatic acceleration of infeasible MM of 
 astronomical sizes. 
New surprising resources have been found, 
  sophisticated 
techniques have been developed, and
by 1987 the exponent of infeasible MM was decreased  
 below 2.38 (see \cite{CW90}),
although as of December 2016 it still has not been decreased below 2.37, that is,  
 in the last  3 decades the progress was nominal
(see \cite{LG14} for the current  record exponent). 
Moreover, the study of infeasible MM has never
made impact on practice of MM or any
other realistic computations 
(cf.  \cite{ASU13}, \cite{AFLG14}, \cite{BCCG16}, and our concluding section).

For $n$  restricted to be ``moderate",
 say, less than 1,000,000,  the current record  
 is 2.7734,
achieved with the algorithm  of \cite{P82} and
 unbeaten since 1982.
 All algorithms supporting smaller exponents 
suffer from the {\em curse of recursion} (cf. \cite{P14a}):
they beat the classical straightforward MM algorithm only 
after performing a large and typically immense number 
of recursive steps, with the input size 
growing exponentially in the number of such steps:
 the straightforward 
algorithm supersedes them until the input size
by far exceeds realistic level, typically by many orders of magnitude.

\subsection{Focus of our presentation}\label{sfcs}

In the context of this development,
 we refocus our presentation compared to  the decades-old survey
 \cite{P84}. In Section~\ref{sfrth}  we still pay 
tribute to the  lasting interest to the exponent
 of infeasible MM, but we do not cover 
various amazing
sophisticated techniques proposed 
{\em exclusively for the
 acceleration of MM of immense sizes}, which
dominated the review of \cite{P84}.
Instead we
 cover in some detail
the 
techniques
that
are efficient already for MM of moderate sizes and 
 have  impacts
on realistic computations beyond MM. 
We feel that 
reduction of various  
 computational problems  to MM
is interesting on its own right 
and because of potential benefits of wider application of
fast or even straightforward algorithms for feasible MM.
Lately the study of these links 
was particularly intensive in
the field of symbolic computation (see, e.g., Proceedings of ISSAC 2015 and ISSAC 2016).

We recall that historically no adequate  comprehensive  review 
of the MM subject has appeared
for  decades,  not to the benefit of the field.
As we already explained,   
after the
breakthrough of 1978, public 
interest  to fast  feasible MM was diverted 
by worldwide excitement about the exponent of infeasible MM, but
also by some other factors.
In particular the advanced techniques of  of \cite{P78} 
were much harder  for non-experts to grasp than the 
catchy  divide and conquer method, 
and public 
attention to the fast algorithm of \cite{P78} for feasible MM was also hurt
by  the folk ``theorem" about its alleged numerical instability.
This ``theorem" has somehow
spread fast throughout  the communities of numerical and symbolic linear algebra,
before the classical paper \cite{BL80} of 1980 proved that
the ``theorem" was false.\footnote{More precisely fast MM
 algorithms are slightly less
 stable numerically than the straightforward MM,
but this instability is  mild and rather little affects actual implementations of MM (see more details
in \cite{BL80}, \cite{H90}, \cite{BDHS12}, \cite{DDHK07}, and  \cite{BBDLS15}).} The results of  \cite{BL80} became widely known only
when the well-recognized article \cite{DDHK07} extended them  to all
recursive bilinear algorithms for MM, but
even in 2010 the Introduction of the important innovative paper \cite{Bodrato:2010:SMM}
still referred to this ``theorem"%
and in 2016, the paper \cite{HSHG16} still talks about ``numerical stability
issues with many levels of recursions''\footnote{The paper
\cite{HSHG16} is not really about fast MM since it unrolls only one
or two levels of recursion of Strassen's algorithm and then
recomputes several times the submatrix additions to avoid using
temporary buffers.}.   

Moreover
the paper \cite{DDHK07} and its successor \cite{BDHS12}
 attack \cite{P78} as well as all the work on fast feasible MM 
from another side. Trying to simplify the discussion
or perhaps to divert public attention from the advanced work on fast feasible MM 
to the domain of their own study and progress, the authors of these papers 
 call ``Strassen-like" 
 all known fast algorithms for MM.
This ``innovation" was based on ignorance:  ``Strassen-like" algorithms, as 
 \cite{DDHK07} and  \cite{BDHS12} formally define them,
have been long and well known under the
much more informative name of {\em noncommutative bilinear algorithms}
(see, e.g., \cite{BM75}).\footnote{The book \cite{P84b} and MM survey articles  \cite{P84} and  \cite{P84a} 
 pay high respect to Strassen's fundamental contributions to fast MM (and similarly did Volker Strassen 
to the contribution of \cite{P66} to algebraic computations in 
sections ``Pan's method" of \cite{S72} and \cite{S74}), 
but we feel that calling all the advanced work on fast feasible MM  Strassen-like,
is not more fair or informative than, say, labeling  Democritus-like
the Faraday's constant, Mendeleev's periodic table, and
 the Heisenberg's principle of
 uncertainty.}
 Our personal communication in 2015 with the authors of \cite{DDHK07}  seems to help:
the label ``Strassen-like" is not used, e.g., in \cite{BBDLS15},
but unfortunately their original widely publicized  contempt to the advanced results on fast feasible MM 
(including those completely distinct from
 Strassen's old algorithm of 1969 and in various important respects superseding it)
has been deeply implanted into scientific community.

In the first part of 
our paper (Sections \ref{sst69}--\ref{sapplcs})
we review the previous study of fast MM 
with the focus 
on fast  feasible MM and
  its impacts and applications 
to realistic computations beyond MM, and for example we
included our novel extension of an old MM technique
  to the computation of the inner product of 2 vectors
and the summation of integers 
(see our Examples~\ref{exinner} and~\ref{exsum}). 

In the second part of the paper (Section \ref{sec:exactcomp}) we 
discuss in some detail symbolic  implementation 
of fast MM directed to minimizing communication cost
and improving parallel implementation.  
The basic algorithms for that study are mostly decades-old,
and complementing them with some more recent advanced 
algorithms for feasible MM   is one of the most natural 
directions to further progress in the field.

\subsection{Organization of our paper}\label{sorgn}

We organize our paper as follows. In Sections~\ref{sst69}
and~\ref{strex} we recall the 2 first 
accelerations of MM, 
in 1969 and 1978, respectively, and comment on 
their impacts beyond MM.  
In Sections
\ref{sblnr},~\ref{strld}, and~\ref{sapa} 
we cover the fundamental classes of 
 bilinear, trilinear and the so called APA algorithms, respectively,
and discuss the associated fundamental techniques of the 
algorithm design, their impact on the acceleration of MM
 and their links to other areas of computing
and algebra.
In Section~\ref{sbsg} we extend the APA  
technique to  
computing the inner product of 2 vectors  and summation. 
In Section~\ref{sfrth} we summarize the history of 
 fast MM after 1978. In Section~\ref{sapplcs} we further 
comment on applications of fast MM and the impact of its study  to other areas of
computing and Mathematics.  
In Section~\ref{sec:exactcomp} we discuss 
applications and implementations of {\em exact} fast MM.
In Section~\ref{sprimp} we
comment on numerical implementation of fast MM
and  perspectives of its study.

 In addition to the acronym $``MM"$ for ``matrix multiplication", 
hereafter we write $``MI"$  for ``nonsingular matrix inversion", 
 $MM(m,n,p)$  for $m\times n$ by $n\times p$ MM,
$MM(n)$  for $M(n,n,n)$, and
 $MI(n)$  for $n\times n$ $MI$.

 $W=(w_{i,j})_{i,j=1}^{m,n}$
 denotes an $m\times n$ matrix with the entries 
$w_{i,j}$, for $i=1,\dots, m$ and $j=1,\dots, n$.

\section{1969: from the Exponent 3 to 2.8074 by means of \texorpdfstring{$2{\times}2$}{2x2}-based recursive processes}\label{sst69}

We begin with recalling 3 old accelerations of the  straightforward MM algorithm.

\begin{example}\label{exw0}  {\em From faster inner product to faster MM.} [See \cite{W68}
and notice technical similarity to the algorithms for polynomial evaluation with 
preprocessing in \cite{P66} and \cite{K97}.]
Observe that
\begin{equation}\label {eqcmmm}
{\bf u}^T{\bf v}=\sum_{i=1}^{n/2}(u_{2i-1}+v_{2i})(v_{2i-1}+u_{2i})-
\sum_{i=1}^{n/2}u_{2i-1}u_{2i}-\sum_{i=1}^{n/2}v_{2i-1}v_{2i},
\end{equation}
 for any even $n$. 
 Apply this 
 identity to all $n^2$ inner products 
defining $n\times n$ MM 
and compute it by using  $0.5  n^3 +n^2$ 
scalar multiplications
and $1.5n^3+2n^2-2n$  additions and subtractions.
\end{example}

\begin{example}\label{ex1} {\em Winograd's $2\times 2$ MM}
(cf.\  \cite{F74}, \cite[pages 45--46]{BM75},
 \cite[Exercise 6.5]{AHU74}, or \cite{DHSS94}).

Compute the product $X=UV$ of a pair of $2\times 2$ matrices, 
$$U=\begin{pmatrix} 
u_{11} & u_{12} \\ 
u_{21} & u_{22}
 \end{pmatrix},~V=\begin{pmatrix} 
v_{11} & v_{12} \\ 
v_{21} & v_{22}
 \end{pmatrix},~X=UV=\begin{pmatrix} 
x_{11} & x_{12} \\ 
x_{21} & x_{22}
 \end{pmatrix}$$ by using the following expressions, 

$s_1=u_{21}+u_{22}$, 
$s_2=s_1-u_{11}$,
$s_3=u_{11}-u_{21}$,
$s_4=u_{12}-s_2$,

$s_5=v_{12}-v_{11}$,
$s_6=v_{22}-s_5$, 
$s_7=v_{22}-v_{12}$,
$s_8=s_6-v_{21}$, 

$p_1=s_2s_6,~p_2=u_{11}v_{11},~p_3=u_{12}v_{21}$,  \\
$p_4=s_3s_7,~p_5=s_1s_5,~p_6=s_4v_{22}$, 
$p_7=u_{22}s_8$,  

$t_1=p_1+p_2,~t_{2}=t_1+p_4$, $t_{3}=t_1+p_5$,  \\
$x_{11}=p_2+p_3,~x_{12}=t_3+p_6,~x_{21}=t_2-p_7,~x_{22}=t_2+p_5$.
\end{example}

This algorithm performs $2\times 2$ MM by
using 7 scalar multiplications and 15 scalar additions and subtractions
instead of 8 and 4, respectively,  that is, 22 versus 12.
Fix, however, a sufficiently large integer $h$,
then recursively $h-1$ times 
 substitute $2\times 2$
matrices for $u_{ij}$, $v_{jk}$, and $c_{ik}$,
for all subscripts $i,j,k\in\{1,2\}$, and reuse the above algorithm
for every  $2\times 2$ MM.
This computation only involves $7^h$ scalar multiplications 
and $7^{h-1}+15(4^h)$ additions and subtractions,
which is readily extended to performing $n\times n$ MM for any $n$
by using $c~ n^{\log_2(7)}$ scalar arithmetic operations 
 overall where $\log_2(7)\approx 2.8074$
and  careful refinement and analysis yield $c<3.92$
 \cite{F74}.

\begin{example}\label{ex0} 
{\em Strassen's $2\times 2$ MM}  \cite{S69}.

$p_1=(u_{11}+u_{22})(v_{11}+v_{22}),~p_2=(u_{21}+u_{22})v_{11},~p_3=u_{11}(v_{12}-v_{22})$,  \\
$p_4=(u_{21}-u_{11})(v_{11}+v_{12}),~p_5=(u_{11}+u_{12})v_{22},~p_6=u_{22}(v_{21}-v_{11})$,  
$p_7=(u_{12}-u_{22})(v_{21}+v_{22})$,  \\
$x_{11}=p_1+p_6+p_7-p_5,~x_{12}=p_3+p_5,~x_{21}=p_2+p_6,~x_{22}=p_1+p_3+p_4-p_2$.
\end{example}

This algorithm performs $2\times 2$ MM by
using 7 scalar multiplications and 18 scalar additions and subtractions,
which is a little inferior to Example~\ref{ex1},
but also initiates a similar recursive process that
performs  $n\times n$ MM for any $n$ 
by using $c'\cdot n^{\log_2(7)}$ scalar arithmetic operations
for $c'<4.54$. 
Being a little slower, this algorithm  is used much less than Winograd's,
but is much
more celebrated  because it appeared earlier.

In MM literature 
Winograd's algorithm is rarely called Winograd's, but usually ``Strassen--Winograd's algorithm"
or ``Winograd's variant of Strassen's algorithm",
and sometimes less competently {\em Strassen's}, {\em Strassen-based},
 or {\em Strassen-like} (cf. \cite{DDHK07}, \cite{BDHS12}).
We can see nothing in Example \ref{ex1} borrowed from Example \ref{ex0}, 
but the cited impact of \cite{DDHK07} seems to be so deeply rooted in the Community of Computer Algebra,
that even the authors of
 the advanced works \cite{Bodrato:2010:SMM} and \cite{CH16}
did not dare to call 
Example \ref{ex1} ``Winograd's  algorithm",
although their own innovative algorithms extended precisely this example 
and not  Example \ref{ex0}.

 The first line of Table \ref{tab2cmpl} displays the estimated 
arithmetic cost of the recursive bilinear algorithms for $MM(n)$, $n=2^k$,
that begin with $2\times 2$ MM by Strassen (of Example \ref{ex0}), Winograd
(of Example \ref{ex1}), and Cenk and Hasan 
(of \cite{CH16}).
The second line shows the decreased cost bounds where recursive process begins with 
$k\times k$ MM performed with straightforward algorithm, and then one of
the 3 above algorithms is applied recursively.
The improvement from \cite{CH16} is technically interesting, but
its arithmetic cost bounds are still significantly inferior 
to those of the algorithms  of \cite{P82} as well as ones of \cite{LPS92}, whose
 implementation  in \cite{K04} is numerically stable and is highly 
efficient in using memory space.

\begin{table}[ht] 
\caption{Arithmetic Complexity of Some $2\times 2$-based Recursive Bilinear Algorithms.}
\label{tab2cmpl}
  \begin{center}
\begin{tabular}{| c | c |c |}
      \hline
 Strassen's (k=10) & Winograd's (k=8)& Cenk and Hasan's (k=8) \\ \hline
 $7n^{2.81} -6n^2$ &  $6n^{2.81} -5n^2$   & $5n^{2.81} + 0.5n^{2.59} + 2n^{2.32} -6.5n^2$ \\ \hline
 $3.89n^{2.81} -6n^2$   & $3.73n^{2.81} -5n^2$ & $3.55n^{2.81} + 0.148n^{2.59} + 1.02n^{2.32} - 6.5n^2$ \\ \hline
 \end{tabular}
\end{center}
\end{table}

\section{Bilinear Algorithms}\label{sblnr}

\subsection{The class of bilinear algorithms}\label{sblnrdef}

The  algorithms of Examples~\ref{ex1} and~\ref{ex0} belong to the important 
class of noncommutative bilinear algorithms, to which we refer just
as {\em bilinear}. 
Such an algorithm for $MM(m,n,p)$
first computes some linear forms $l_q(U)$ and $l'_q(V)$ in the entries of 
the input matrices $U=(u_{ij})_{i,j=1}^{m,n}$ and $V=(v_{jk})_{j,k=1}^{n,p}$
and then  
the entries $x_{ik}=\sum_j u_{ij} v_{jk}$ of the product $X=UV$ as 
the $mp$ bilinear forms,
$$l_q(U)=\sum_{i,j=1}^{m,n}\alpha_{ij}^{(q)}u_{ij},~l'_q(V)=\sum_{j,k=1}^{n,p}\beta_{jk}^{(q)}v_{jk},~q=1,\dots,r,$$
$$x_{ik}=\sum_{q=1}^r\gamma_{ik}^{(q)}l_q(U)l'_q(V),~~i=1,\dots,m;~k=1,\dots,p.$$
Here $r$ is said to be the {\em rank of the algorithm}, 
$u_{ij}$, $v_{jk}$, and $x_{ik}$ are  variables
or  block matrices,  and
$\alpha_{ij}^{(q)}$, $\beta_{jk}^{(q)}$, and $\gamma_{ik}^{(q)}$  are constant
coefficients for all $i,j,k$, and $q$. They define coefficient tensors 
$(\alpha_{ij}^{(q)})_{i,j,q}$, $(\beta_{jk}^{(q)})_{j,k,q}$, and $(\gamma_{ik}^{(q)})_{i,k,q}$,
which can be also viewed as coefficient matrices 
\begin{equation}\label{eqcoeff}
A=(\alpha_{ij}^{(q)})_{(i,j),q},~B=(\beta_{jk}^{(q)})_{(j,k),q},~{\rm and}~C=(\gamma_{ik}^{(q)})_{(i,k),q}.
\end{equation}

\subsection{The ranks of bilinear algorithms and the \texorpdfstring{$MM$}{MM} exponents}\label{sblnrrnk}

Suppose $m=n=p$, assume that 
$u_{ij}$, $v_{jk}$, and $x_{ik}$ are   block matrices,  
and then again
recursively apply the same bilinear algorithm of rank $r$
to block matrices. The resulting 
 bilinear  algorithms have ranks $r^{h}=c'n^{h\omega}$ 
for
$MM(n^h)$,
$h=2,3,\dots$,  
$\omega=\omega_{n,r}=\log_{n}(r)$ and a constant $c'$ is independent of $n$ and $h$.
One can readily
extend these observations to 
the following result.

\begin{theorem}\label{thexpnr}
Given a bilinear algorithm of rank $r$ for $MM(n)$
for a pair of positive integers $n$ and $r$,
one can perform 
$MM(K)$ by using $cK^{\omega}$ arithmetic operations for 
any $K$, $\omega=\omega_{n,r}=\log_{n}(r)$,
and
a constant $c$ independent of $K$.
\end{theorem}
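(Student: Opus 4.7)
The plan is to iterate the given bilinear algorithm recursively by substituting block matrices for the scalar entries $u_{ij}$ and $v_{jk}$, then extend from powers of $n$ to arbitrary sizes $K$ by zero-padding. The first thing to verify is that such a substitution is legitimate: because the algorithm is \emph{noncommutative} bilinear, each of the $r$ scalar products $l_q(U)\,l_q'(V)$ takes one factor from $U$ and one from $V$, with the $U$-factor on the left, so the product remains well-defined and correct when $u_{ij}$ and $v_{jk}$ are replaced by block matrices of any compatible sizes. Consequently one application of the algorithm reduces an $MM(n^h)$ instance to $r$ instances of $MM(n^{h-1})$ together with some number of block-matrix additions for forming the inputs $l_q(U)$, $l_q'(V)$ and for combining the outputs.

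Next I would set up and solve the cost recurrence. Let $a$ denote the total number of scalar additions and subtractions used by the base algorithm for $MM(n)$, so that at each recursive level the block additions contribute at most $a\,(n^{h-1})^2 = a\,n^{2(h-1)}$ operations. Letting $T(n^h)$ be the arithmetic cost of the $h$-fold recursion, we then have
\begin{equation*}
T(n^h) \le r\,T(n^{h-1}) + a\,n^{2(h-1)},\qquad T(1)=1.
\end{equation*}
If $r > n^2$ this unwinds, by a straightforward geometric-sum argument, to $T(n^h) \le c_0\,r^h$ with $c_0 = 1 + a/(r-n^2)$; and if $r \le n^2$ then already $\omega_{n,r}\le 2$ and the classical algorithm furnishes a bound $2N^3 \le 2N^\omega\cdot N^{3-\omega}$ that is subsumed by padding arguments below (or one simply uses $\omega=2$ with the trivial classical bound, which is the only case of genuine interest here). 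Thus $T(n^h)\le c_0\, r^h = c_0\,(n^h)^{\log_n r} = c_0\,(n^h)^\omega$.

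Finally, for an arbitrary positive integer $K$, I would choose $h=\lceil \log_n K\rceil$, so that $n^{h-1} < K \le n^h$, and embed each $K\times K$ input matrix into an $n^h\times n^h$ matrix by padding with zeros. Multiplying the padded matrices by the recursive algorithm costs at most $c_0\,(n^h)^\omega \le c_0\,(nK)^\omega = (c_0\, n^\omega)\,K^\omega$, and discarding the padding rows and columns of the result gives the desired product. Setting $c := c_0\, n^\omega$ (which depends only on $n$, $r$, and $a$, not on $K$) establishes the claimed bound $c\,K^\omega$.

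I expect the only subtle point to be the bookkeeping showing that the addition cost does not dominate—specifically, ensuring that the geometric sum $\sum_{i=0}^{h-1} r^i\,n^{2(h-1-i)}$ is absorbed into the constant $c_0$ whenever $r>n^2$. Everything else (the legitimacy of the block substitution and the zero-padding step) is routine once one recalls that the algorithm is noncommutative bilinear, as emphasized in the definition preceding the theorem.
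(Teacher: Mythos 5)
Your proposal is correct and follows essentially the same route the paper sketches: recursive block substitution (valid because the algorithm is noncommutative bilinear) yielding rank $r^h$ for $MM(n^h)$, a geometric-sum bound absorbing the linear-combination costs into the constant, and zero-padding to pass from $n^h$ to arbitrary $K$. The paper states the recursion observation and leaves the rest as "readily extended," so your write-up simply supplies the standard bookkeeping it omits; the only cosmetic wrinkle is your handling of the degenerate case $r\le n^2$, which is vacuous anyway since any bilinear algorithm for $MM(n)$ has rank at least $n^2$.
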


Now we define 

(i) the {\em exponent of MM(n)},
\begin{equation}\label{eqexpnmmn}
\omega_n=\min_r\log_{n}(r)
\end{equation}
where an integer $n>1$ is  fixed
and the minimum is over the ranks $r$ of all 
bilinear algorithms for $MM(n)$
and 

(ii) the {\em exponent of MM},
\begin{equation}\label{eqexpnmm}
\omega=\min_n\omega_n
\end{equation}
where the minimum is over all integers $n>1$.\footnote{Here we consider MM over the fields
of real and complex numbers. The exponent $\omega$ (although not the overhead constant)   
stays invariant over   
the  fields having the same characteristic \cite[Theorem 2.8]{S81}.
Some of our results 
can be used in important applications of MM over semi-rings, but generally 
in that case distinct techniques are used
(cf.  \cite{AGM97}, \cite{DP09}, \cite{Y09}, \cite{LG12}).} 

(iii)  The latter concept
 is mathematically attractive, but the highly rewarded work for its upper estimates
has diverted public attention from feasible to infeasible MM.
 To straighten the unbalanced study 
of this subject, one should instead
estimate the {\em exponents of feasible MM},
\begin{equation}\label{eqexpnmm<}
\omega_{<N}=\min_{n<N}\omega_n,
\end{equation}
for realistic upper bounds $N$ on the  dimension of MM inputs.

\subsection{Bilinear versus quadratic
algorithms for bilinear problems}\label{sblnrfrth}

The straightforward  algorithm for $MM(m,n,p)$ is bilinear of rank $mnp$. 
The bilinear algorithms of Examples~\ref{ex1} and~\ref{ex0} for $MM(2)$
have rank 7, 
which    turned out to be 
optimal (see \cite{HK69}, \cite{HK71}, 
   \cite{BD78}).
 \cite[Theorem 3]{P72} as well as  \cite[Theorem 0.3]{P14} and \cite{dG78})
provide an explicit expression for all bilinear algorithms of rank 7 for $MM(2)$,
including the algorithms of Examples~\ref{ex1} and~\ref{ex0}
as special cases. 
Among them 15 scalar additions and subtractions of Example~\ref{ex1}
are optimal 
\cite{P76}, \cite{B95}. 

One can define bilinear algorithms for any {\em bilinear problem}, that is, for
the computation of any set of bilinear forms, 
e.g.,  the product of two complex numbers 
$(u_1+{\bf i} u_2)(v_1+{\bf i} v_2)=(u_1v_1-u_2v_2)+{\bf i}(u_1v_2+u_2v_1)$, 
${\bf i}=\sqrt{-1}$.
The straightforward bilinear algorithm has rank 4, and here is a rank-3 bilinear algorithm,
$l_1l_1'=u_1v_1$, $l_2l_2'=u_2v_2$, $l_3l'_3=(u_1+u_2)(v_1+v_2)$,   
$u_1v_1-u_2v_2=l_1l_1'-l_2l'_2$, $u_1v_2+u_2v_1=l_3l_3'-l_1l_1'-l_2l_2'$. 
See \cite{W70}, \cite{F72}, \cite{F72a},
\cite{P72}, \cite{BD73}, \cite{HM73}, \cite{S73}, \cite{P74}, \cite{BD76}, \cite{BD78}, 
on the early study of bilinear algorithms
and see a concise exposition in \cite{BM75}. The book
\cite{W80} covers various  efficient
 bilinear algorithms for multiplication of pairs of integers and polynomials
(the latter operation is also called the {\em convolution} of the coefficient  vectors),
with further applications to the design of FIR-filters. 

The minimal rank of all bilinear algorithms
for a fixed bilinear 
problem such as $MM(m,n,p)$
is called the {\em rank of the problem}.
It can be bounded in terms of the minimal arithmetic cost
of the solution of the problem
 and vice versa
\cite[Theorem 1]{P72}, \cite{BM75}.
  
The algorithm 
 of Example~\ref{exw0} of rank $r=r(n)=0.5n^3 +n^2$ for $MM(n)$, however, 
is not (noncommutative) bilinear; such algorithms are
  called {\em quadratic}
or commutative bilinear. 
(See \cite{W71} on some lower bounds on
the rank of such algorithms for MM.)
We cannot extend to them  
recursive processes that bound the MM exponents by $\log_n(r)$,\footnote{Hereafter
we refer to  decreasing upper bounds 
on the exponent of MM as {\em decreasing the exponent}, for short.} 
because MM is not commutative, e.g.,
 the equation $u_{2i-1}u_{2i}=u_{2i}u_{2i-1}$  
is invalid for  matrices
$u_{2i-1}$ and $u_{2i}$.

The algorithm, however, was of great importance in the history of 
fast MM: it was the first acceleration of the straightforward MM
(it saves about 50\%
multiplications), but most important,
it motivated  the effort for the design of 
bilinear (rather than quadratic) algorithms of rank less than $n^3$ for $MM(n)$.
It ``remained" to devise such an algorithm at least for $n=2$,
and Strassen received ample recognition for his brilliant design 
that accomplished exactly this.

\section{1978: from 2.81 to 2.78 by means of trilinear aggregation}\label{strex}

In 1969 the exponents below Strassen's 2.8074 
became the target 
of literally all the leading researchers in the field, worldwide,
 but remained a dream
for almost a decade. 
This dream would have come true
based on bilinear algorithms of rank 6 for $MM(2)$ 
 or rank 21 for
$MM(3)$,
but it was proved that the rank of $MM(2)$ exceeds 6, 
and it is still unknown whether $MM(3)>22$. 

We refer the reader to
the paper 
 \cite{MR14} for
the current record lower bounds on the 
rank of $MM(n)$ for all $n$, to the papers
 \cite{HK69}, \cite{HK71},  \cite[Theorem 1]{P72}, and
\cite{BD73},   \cite{BD78},   
 \cite{B89},  \cite{B99},  \cite{B00},
 \cite{RS03}, \cite{S03}, and \cite{L14}
for some earlier work in  this direction, and to the papers
\cite{DIS11} and \cite{S13} for various lower and upper bounds
on the arithmetic complexity and the ranks of rectangular MM of smaller sizes.

The exponent was decreased only in 1978, after almost a decade of stalemate, 
 when
the paper \cite{P78} presented 
a bilinear algorithm of rank 143,640 for $MM(70)$. This breakthrough implied 
the exponent $\omega=\log_{70} (143,640)<2.7962$
for $MM(n)$, $MI(n)$, Boolean $MM(n)$, 
and a variety of other well-known computational problems.
The algorithm of \cite{P78} has
extended an algorithm of 
the paper \cite{P72} of 1972,  
published in Russian\footnote{Until 1976 the second author lived in the Soviet Union. 
From 1964 to 1976 he has been working
 in Economics in order
to make his living
and has written the papers \cite{P66} and \cite{P72} in his spare time.}
and translated into English only in 2014
in \cite{P14}. 

The progress was due to the novel combination of two techniques:
 {\em trilinear interpretation} of bilinear algorithms
and the {\em aggregation} method. By following \cite{P78}
we call this combination
{\em trilinear aggregation}.
By refining 
 this combination of 2 techniques
the paper
 \cite{P82} accelerated  MM of moderate sizes 
and
 yielded the exponent 2.7734. As we already mentioned, various algorithms 
combining trilinear aggregation
with other advanced techniques
  decreased the exponent 
 below this level
(and in 1986  even below  2.38), but 
 only when they 
were applied to MM of immense sizes because of the curse of recursion.

The technique of trilinear aggregation has 
been recognized for its impact
on the decreases of the MM exponent,
but the paper \cite{P72} was also a historical  
landmark in the study of multilinear and tensor decompositions.
Such decompositions  introduced by Hitchcock
in 1927  received little attention except for
a minor response in  1963--70 with half of a dozen
papers  in  
 the psychometrics literature. 
The paper \cite{P72} of 1972 
 provided the earliest known application of nontrivial 
multilinear and tensor decompositions to fundamental matrix computations,
now a popular flourishing area in
linear and multilinear algebra with a wide range of 
important applications to modern computing 
(see  \cite{T03}, \cite{KB09},  \cite{OT10}, 
\cite{GL13},
and the bibliography therein).
Nevertheless
the paper  \cite{P72} has rarely  been cited at all
and has never been cited in the papers on multilinear and tensor decompositions.

\section{Trilinear Decompositions and Duality}\label{strld}

Next we define trilinear representation of MM, first proposed and used in 
the paper \cite{P72}. 
We are also going to link it to some important
computations beyond MM.

Let $U=(u_{ij})_{i,j}$ and $V=(v_{jk})_{j,k}$ be a pair of  $m\times n$
and $n\times p$ matrices, respectively, and
let the equations
$\sum_j u_{ij} v_{jk}=\sum_{s=1}^rw_{ik}^{(s)}l_s(U)l'_s(B)$ for all $i,j$
represent 
 a bilinear algorithm  of rank $r$ 
for the matrix product $X=UV$.

Define  
a  {\em trilinear decomposition} of rank $r$
for $trace(UVW)=\sum_{i,j,k} u_{i,j} v_{jk}w_{ki}$
by multiplying  these equations
by  variables $d_{ki}$ and summing the products in $i$ and $k$.
Here $W=(w_{ki})_{k,i}^{n,m}$
is an auxiliary  $n\times m$ matrix and 
 trace$(M)$ denotes the trace of a matrix $M$.
(Equivalently we can decompose  
the tensor 
 of the trilinear form $trace(UVW)$ 
 into the sum of $r$ tensors of  rank 1.)

\begin{example}\label{extrmm2}
{\em A trilinear decomposition    
of rank 7 for $MM(2)$.}  \\
$~~~~~~~~~ \sum_{i,j,h=1}^2u_{ij}v_{jh}w_{hi}=\sum_{s=1}^7l_sl_s'l_s''$,~
$l_1l_1'l_1''=(u_{11}+u_{22})(v_{11}+v_{22})(w_{11}+w_{22})$, \\
$l_2l_2'l_2''=(u_{21}+u_{22})v_{11}(w_{21}-w_{22})$,
$l_3l_3'l_3''=u_{11}(v_{12}-v_{22})(w_{12}+w_{22})$,
$l_4l_4'l_4''=(u_{21}-u_{11})(v_{11}+v_{12})w_{22}$,\\
$l_5l_5'l_5''=(u_{11}+u_{12})v_{22}(w_{12}-w_{11})$,
$l_6l_6'l_6''=u_{22}(v_{21}-v_{11})(w_{11}+w_{21})$,
$l_7l_7'l_7''=(u_{12}-u_{22})(v_{21}+v_{22})w_{11}$.
\end{example}
Conversely, we can come back to 
the original bilinear algorithm
for  the matrix product $X=UV$ if we
interpret both sides of any decomposition 
of the  trilinear form $trace(UVW)$
as linear forms in the variables $w_{ih}$ and
equate the coefficients of these variables on both sides of the 
decomposition.
 
More generally,
 \cite[Theorem 2]{P72} 
states the equivalence of 
a bilinear algorithm of rank $r$ for $MM(m,n,p)$
 to a trilinear decomposition 
of rank $r$ for 
 the associated trilinear form and its tensor.

Instead of equating the variables $w_{ij}$
 on both sides 
of the trilinear decomposition, we can
 equate the coefficients of 
all variables $u_{ij}$ or all variables $v_{jh}$ 
and then arrive at 2 
other dual bilinear algorithms
of the same rank
for the problems $M(n,p,m)$ and $M(p,m,n)$. 

By interchanging the subscripts of the variables,
we arrive at the dual bilinear algorithms of the same rank  
for the problems $MM(m,p,n)$, $MM(n,m,p)$, and $MM(p,n,m)$ as well
(cf.\ \cite[part 5 of Theorem 1]{P72},  \cite{BD73},
\cite{HM73}, \cite{P74}). 
The latter extension from triples to 6-tuples 
is pertinent to MM,
because it uses the double subscripts 
for the variables,  
but the triples of bilinear algorithms
can be generated from their common 
trilinear representation 
for any bilinear computational problem, e.g., for
multiplication
of 2 complex numbers in the following example.

\begin{example}\label{extrcopr} 
{\em A trilinear
decomposition 
of rank 3 for 
 multiplication
of 2 complex numbers.} 
$$u_1v_1w_1-u_2v_2w_1+u_1v_2w_2+u_2v_1w_2=
u_1v_1(w_1-w_2)-u_2v_2(w_1+w_2)+(u_1+u_2)(v_1+v_2)w_2.$$
\end{example}
 
For a sample application of the duality technique, one can readily deduce 
the following result
 (part 1 of \cite[Theorem 1]{P72}).

\begin{theorem}\label{thdual} 
Given a bilinear or trilinear algorithm 
of rank $r$ for   $MM(m,n,p)$
and any 4-tuple of integers $r$, $m$, $n$, and $p$ such that $mnp>1$,
one can perform $MM(K)$ by using $cK^{\omega}$
arithmetic operations for any $K$,
 $\omega=\omega_{m,n,p,r}=3\log_{mkn}(r)$,
and a constant $c$ independent of $K$.
\end{theorem}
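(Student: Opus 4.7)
The plan is to reduce Theorem~\ref{thdual} to Theorem~\ref{thexpnr} by producing, out of any rank-$r$ bilinear (or trilinear) algorithm for $MM(m,n,p)$, a rank-$r^3$ bilinear algorithm for the square problem $MM(mnp)$. Once such an algorithm is in hand, Theorem~\ref{thexpnr} applied with $n\leftarrow mnp$ and $r\leftarrow r^3$ immediately yields an exponent of $\log_{mnp}(r^3)=3\log_{mnp}(r)$ for $MM(K)$ and the desired overall complexity bound $cK^{\omega}$.

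The first step is to invoke the duality discussed just before the statement: by passing through the common trilinear representation and equating coefficients of the other two groups of variables, the given rank-$r$ bilinear algorithm for $MM(m,n,p)$ yields two companion rank-$r$ bilinear algorithms, one for $MM(n,p,m)$ and one for $MM(p,m,n)$. If the hypothesis is stated in trilinear rather than bilinear form, the same observation applies directly, since the trilinear decomposition of $\mathrm{trace}(UVW)$ is symmetric under the cyclic renaming of $(U,V,W)$.

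The second step is the composition. Starting from the rank-$r$ algorithm for $MM(m,n,p)$, replace each of its $r$ scalar bilinear products by a call to the dual rank-$r$ algorithm for $MM(n,p,m)$, with its scalar inputs promoted to appropriately-sized blocks; this is legitimate because a bilinear algorithm remains correct when its scalar variables are replaced by matrices (noncommutativity is respected, as the left and right factors appear in fixed order). The result is a rank-$r^2$ bilinear algorithm for $MM(mn,\,np,\,pm)$. Repeating the substitution once more, this time replacing each scalar product by the dual rank-$r$ algorithm for $MM(p,m,n)$ acting on blocks, produces a rank-$r^3$ bilinear algorithm for $MM(mnp,\,mnp,\,mnp)=MM(mnp)$.

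With the square rank-$r^3$ algorithm for $MM(mnp)$ in hand, Theorem~\ref{thexpnr} directly gives an algorithm that performs $MM(K)$ in $cK^{\omega}$ arithmetic operations with $\omega=\log_{mnp}(r^3)=3\log_{mnp}(r)$ and a constant $c$ independent of $K$, and the hypothesis $mnp>1$ guarantees that the base of the logarithm is strictly greater than $1$ so that the recursion is nontrivial. The main obstacle I anticipate is a bookkeeping one rather than a conceptual one: checking carefully that the three successive block substitutions match input shapes (an $(mn)\times(np)$ block matrix times an $(np)\times(pm)$ block matrix, and then analogously at the next level), and that at each substitution the rank multiplies exactly, without any hidden commutativity assumption, so that the final count is indeed $r^3$ and the dimension is indeed $mnp$ in all three directions.
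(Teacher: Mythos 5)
Your proposal is correct and follows exactly the route the paper intends: it invokes the cyclic duality (rank-$r$ algorithms for $MM(n,p,m)$ and $MM(p,m,n)$ from the common trilinear decomposition of $\mathrm{trace}(UVW)$), composes the three to obtain a rank-$r^3$ bilinear algorithm for the square problem $MM(mnp)$, and then applies Theorem~\ref{thexpnr}; this is precisely the ``readily deduced'' argument the paper attributes to the duality technique of \cite{P72}. Your final exponent $3\log_{mnp}(r)$ is the intended reading of the statement's $3\log_{mkn}(r)$, which is a typo.
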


For further applications of the duality technique, see 
efficient bilinear algorithms
for FIR-filters and multiplication of complex numbers and polynomials
in \cite{W80}.

\section{Trilinear Aggregation}\label{straggr}

Aggregation  technique is well-known in 
business, economics, computer science, telecommunication,
natural sciences, medicine, and statistics. 
The idea is to  mass together or cluster independent but similar units
into much fewer aggregates. Their study is simpler,
but its results are supposed to
characterize all these units
either directly or by means of special 
disaggregation techniques.
Such aggregation/disaggregation processes
 proposed in \cite{MP80} became a basis for creating
the field of {\em Algebraic Multigrid},  now quite popular. 
 
Aggregation/disaggregation techniques are behind
the acceleration of MM in Example~\ref{exw0},
which was preceded by similar application of this
technique to polynomial evaluation with preprocessing of 
coefficients \cite{P66}, \cite{K97}.
In that example one first rewrite $u_{2i}v_{2i}=v_{2i}u_{2i}$ 
by using commutativity of multiplication (which does not hold if 
$u_{2i}$ and $v_{2i}$ are matrices), then
 aggregates the terms 
$u_{2i-1}v_{2i-1}$ and $v_{2i}u_{2i}$
into the single term 
$(u_{2i-1}+v_{2i})(v_{2i-1}+u_{2i})$,
thus saving 50\% of scalar multiplications,
that is, $0.5 n^3$ for $n\times n$ MM.
For disaggregation one subtracts 
the {\em correction terms}
$u_{2i-1}u_{2i}$ and $v_{2i}v_{2i-1}$.
$n\times n$ MM involves  $0.5n^2$ pairs of such products,
and so correction involves just $n^2$ scalar multiplications
overall, which is a small sacrifice compared to
saving $0.5n^3$ scalar multiplications.
 
The papers \cite{P72} and  \cite{P78} strengthen  
aggregation  based on
 restating MM as the problem of the decomposition
of the trilinear form $trace(UVW)$, so that 
some additional
links among the subscripts of the input and output entries
enable stronger aggregation and faster MM.

Various implementations of this technique appeared
 in \cite{P78},   \cite{P79},  \cite{P80}, and a number of  subsequent
papers. For demonstration we apply it to
{\em Disjoint MM} where we compute two independent matrix products $AB$ and $UV$
by decomposing the trilinear form
$trace(XYZ+UVW)=\sum_{i,j,k=1}^{m,n,p}(x_{ij}y_{jk}z_{ki}+u_{jk}v_{ki}w_{ij})$
into the sum of rank-1 tensors.

Let 
$T=\sum_{i,j,k=1}^{m,n,p}(x_{ij}+u_{jk})(y_{jk}+v_{ki})(z_{ki}+w_{ij})$
denote a trilinear aggregate made up of the 2 monomials 
$x_{ij}y_{jk}z_{ki}$ and $u_{jk}v_{ki}w_{ij}$
and let
 $T_1=\sum_{i,j=1}^{m,n}x_{ij}s_{ij}w_{ij}$,
$T_2=\sum_{j,k=1}^{n,p}u_{jk}y_{jk}r_{jk}$, and $T_3=\sum_{k,i=1}^{p,m}q_{ik}v_{ki}z_{ki}$
denote 3 groups of  correction terms,
where $q_{ik}=\sum_{j=1}^{k}(u_{ij}+u_{jk})$,
$s_{ij}=\sum_{k=1}^{n}(y_{jk}+v_{ki})$, and $r_{jk}=\sum_{i=1}^{m}(z_{ki}+w_{ij})$.
Then the equation $trace(XYZ+UVW)=T-T_1-T_2-T_3$ defines
a trilinear decomposition
 of rank $mnp+mn+np+pm$ (rather than the straightforward $2mnp$).

Table~\ref{tabaggr2} displays this aggregation/disaggregation technique.

\begin{table}[ht] 
\caption{Aggregation/disaggregation of a pair of terms.}
\label{tabaggr2}
  \begin{center}
\begin{tabular}{| c | c |c |}
      \hline
 $x_{ij}$ & $y_{jk}$ & $z_{ki}$  \\ \hline
 $u_{jk}$ &  $v_{ki}$   & $w_{ij}$ \\ \hline
 \end{tabular}
\end{center}
\end{table}

The product of the 3 sums  of
pairs on input entries in each of the 3  columns of the  table 
is an aggregate.
The 2 products of triples of entries of each of the 2 rows
are the output terms 
$x_{ij}y_{jk}z_{ki}$
and $u_{jk}v_{ki}w_{ij}$.
The cross-products of other triples 
of the table define 6 correction terms.
Their sum over all $n^3$ triples of indices 
$i,j$ and $k$ has rank $2(mn+np+pm)$.
By  
subtracting this sum from the
sum of all $mnp$ aggregates,
we decompose $2mnp$ terms of $trace(XYZ+UVW)$
 into the sum of $mnp+2(mn+np+pm)$
terms. 
For $m=n=p=34$ this implies a decomposition of 
rank $n^3+6n^2$ for a pair of disjoint $MM(n)$.
 
Demonstration of 
the power  of trilinear aggregation
can be made most transparent
 for Disjoint MM, whose natural link to 
trilinear aggregation 
has  been shown in \cite{P84},  \cite[Section 5]{P84a},
 \cite[Section 12]{P84b}, and \cite{LPS92}.
Such constructions 
 for Disjoint MM, however,
can frequently be  extended to $MM(n)$. 
In particular, by playing with odd and even subscripts
of the matrix entries,
the paper \cite{P72} obtained
a trilinear decomposition of rank $0.5 n^3+3n^2$ for $MM(n)$ and any even $n$
by means of 
extending 
the above decomposition of $trace(XYZ+UVW)$.
This implied the $MM$ exponent $\log_n (0.5n^3+3n^2)$,
which is less than 2.85 for  $n=34$. 
 
The paper \cite{P78} defined
a trilinear decomposition and bilinear algorithms
 of rank $(n^3-4n)/3+6n^2$ 
for  $MM(n)$, $n=2s$, and any positive integer $s$.
Substitute $n=70$ and obtain the MM exponent  2.7962.
Then again it is convenient to demonstrate 
this design for Disjoint MM 
associated with a decomposition 
of the trilinear form $trace(XYZ+UVW+ABC)$.
The basic step is the  
aggregation/disaggregation defined by Table~\ref{tabaggr3}.

\begin{table}[ht] 
\caption{Aggregation/disaggregation of a triple of terms.}
\label{tabaggr3}
  \begin{center}
\begin{tabular}{| c | c |c |}
      \hline
 $x_{ij}$ & $y_{jk}$ & $z_{ki}$  \\ \hline
 $u_{jk}$ &  $v_{ki}$   & $w_{ij}$ \\ \hline
 $a_{ki}$   & $b_{ij}$ & $c_{jk}$ \\ \hline
 \end{tabular}
\end{center}
\end{table}

Sum the $mkn$
aggregates 
$(x_{ij}+u_{jk}+a_{ki})(y_{jk}+v_{ki}+b_{ij})(z_{ki}+w_{ij}+c_{jk})$,
subtract order of $n^2$ correction terms,  and
obtain a decomposition of rank $n^3+O(n^2)$
for $trace(XYZ+UVW+ABC)$,
versus the straightforward $3n^3$.
The trace represents 
3 disjoint problems of $MM(n)$, that is,
the computation of
the 3 independent matrix products $XY$, $UV$,  and $AB$
of size  $n\times n$ (and can be readily extended to 
 3 MM products of sizes $m\times n$
by $n\times p$, $n\times p$
by $p\times m$, and $p\times m$
by $m\times n$),
and we obtain a bilinear algorithm of rank $n^3+O(n^2)$
for this bilinear task.

With a little more work 
one obtains a similar 
trilinear decomposition
 of rank $(n^3-4n)/3+6n^2$ 
for  $MM(n)$, $n=2s$, and any positive integer $s$
(see  \cite{P78}).
For $n=70$ we arrive at an upper bound 2.7962 on the MM exponent.
By  refining this construction the algorithm of  \cite{P82}
decreased the upper bound below 2.7734.

~~

\section {APA Algorithms and  Bini's Theorem}\label{sapa}

The 
technique of
{\em Any Precision Approximation} (hereafter we use the acronym 
{\em APA}) was another basic ingredient of the algorithms 
supporting the decrease of the exponent of MM. The paper \cite{BCLR79}
achieved the first APA acceleration
of MM, by yielding the exponent 2.7799.
According to \cite{R79},  this came from 
 computer search for  
partial $MM(2)$ where the goal was the computation of only 3 entries 
of  $2 \times 2$ matrix product.

Next we demonstrate the combination of APA and trilinear aggregation,
which is more transparent and immediately produces the exponent 2.66.
Consider the following table.

\begin{table}[ht] 
\caption{APA aggregation/disaggregation of a pair of terms.}
\label{tabaggr2apa}
  \begin{center}
\begin{tabular}{| c | c |c |}
      \hline
 $x_{ij}$ & $y_{jk}$ & $\lambda^2 z_{ki}$  \\ \hline
 $\lambda u_{jk}$ &  $\lambda v_{ki}$   & $w_{ij}$ \\ \hline
 \end{tabular}
\end{center}
\end{table}

It defines the aggregate
$(x_{ij}+\lambda u_{jk})(y_{jk}+
\lambda v_{ki})(\lambda^2 z_{ki}+w_{ij})$
and 3 correction terms, similarly to Table~\ref{tabaggr2},
but with a decisive difference -- the term
$\lambda^3 (x_{ij}+u_{jk})v_{ki}z_{ki}$
has a smaller order of magnitude as $\lambda\rightarrow 0$.
Therefore we arrive at  trilinear decomposition
$$trace(XYZ+UVW)=T-T_1-T_2+O(\lambda)$$
where
$$T=\lambda^{-1} \sum_{i,j,k=1}^{m,k,n}(x_{ij}+\lambda u_{jk})(y_{jk}+
\lambda v_{ki})(\lambda^2 z_{ki}+w_{ij}),~
 T_1=\sum_{i,j=1}^{m,k}x_{ij}s_{ij}w_{ij},~  
T_2=\sum_{j,k=1}^{k,n}u_{jk}y_{jk}r_{jk},$$
$s_{ij}=\sum_{k=1}^{n}(y_{jk}+\lambda v_{ki})$,
 and $r_{jk}=\sum_{i=1}^{m}(\lambda^2 z_{ki}+w_{ij})$.

Drop the terms of order $\lambda$,
and obtain 
a decomposition for Disjoint $MM(m,n,p)$ having
  {\em  border}  rank  
 $mnp+mn+np$. For $m=p=7$, $n=1$
this implies an {\em APA exponent} of MM 
 $\omega=3\log_{49}31.5<2.66$.
(Here we use Sch{\"o}nhage's result of \cite{S81}
that deduces the MM exponent from Disjoint MM.)

The above APA decomposition using  $mnp+mn+np$ terms
is numerically unstable. Indeed
we would corrupt the output
if we drop the summands 
$\lambda u_{jk}$ and $\lambda v_{ki}$
in the sums $x_{ij}+\lambda u_{jk}$
and $y_{jk}+\lambda v_{ki}$ in
 the aggregate
$(x_{ij}+\lambda u_{jk})(y_{jk}+
\lambda v_{ki})(\lambda^2 z_{ki}+w_{ij})$,
but keeping these summands
doubles the  precision required for the
representation of these sums.
Similarly all other known bilinear APA algorithms 
are prone to numerical stability problems
if their  rank exceeds 
their border rank.

 In \cite{B80}, however, Bini  proved
 that,
for the purpose of decreasing the exponent of  MM,
this deficiency is immaterial if we allow 
unrestricted recursive processes, that is, 
{\em if we ignore the curse of recursion.}
Namely he proved that Theorem~\ref{thdual} 
holds even if border rank 
replaces rank in its statement.
Bini proved this result for MM,
but Sch{\"o}nhage in \cite{S81}
extended it to Disjoint MM.
Both proofs yield acceleration of straightforward MM only 
where its 
input size becomes huge because of 
 the curse of recursion.
\begin{theorem}\label{thdualapa}  
We can perform $MM(K)$ by using $\bar cK^{\omega}$
arithmetic operations for any $K$, where
$\bar c$ is  a constant independent of $K$ and
 $\omega=\omega_{m,n,p,r}=3\log_{mkn}(r)$,
provided that we are given
a bilinear or trilinear APA algorithm having a border rank $r$
for  $MM(m,n,p)$ and
 a 4-tuple of integers $r$, $m$, $n$, and $p$ such that $mnp>1$.
\end{theorem}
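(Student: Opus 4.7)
The plan is to reduce the border-rank statement to the exact-rank Theorem~\ref{thdual} via Bini's polynomial interpolation trick. By the trilinear/bilinear duality already established, I may assume the APA algorithm is bilinear. Write it as
$$\sum_{s=1}^{r} u_s(\lambda)\,v_s(\lambda)\,w_s(\lambda)=\lambda^{e}\,T_{m,n,p}+\sum_{j=1}^{D}\lambda^{e+j}\,T_j,$$
where $T_{m,n,p}$ is the tensor for $MM(m,n,p)$, each $u_s,v_s,w_s$ is a polynomial in $\lambda$ of degree at most $d$, and $T_1,\dots,T_D$ are the ``junk'' tensors that appear at higher orders in $\lambda$.

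First I would compose the algorithm with itself $k$ times using the \emph{same} indeterminate $\lambda$ throughout, obtaining an APA algorithm of rank $r^k$ for $MM(m^k,n^k,p^k)$ whose scalar output, viewed as a polynomial $P_k(\lambda)$, satisfies $P_k(\lambda)=\lambda^{ek}\,T_{m^k,n^k,p^k}+O(\lambda^{ek+1})$ and has total degree at most $D_k\le 3dk=O(k)$. Crucially, sharing $\lambda$ across all levels makes the degrees add rather than multiply, so $D_k$ is linear in $k$. I would then extract the exact target tensor by Lagrange interpolation: evaluate $P_k$ at $D_k+1$ distinct nonzero values $\lambda_0,\dots,\lambda_{D_k}$ of the base field, each evaluation running the recursive scheme as an honest exact bilinear algorithm of rank $r^k$; the desired coefficient of $\lambda^{ek}$ is then a fixed rational linear combination of the $D_k+1$ resulting matrices.

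The total cost for $MM(m^k,n^k,p^k)$ is therefore $O(k\cdot r^k)$ arithmetic operations; setting $K=(mnp)^{k/3}$ gives $k=\Theta(\log K)$ and cost $O(K^{\omega}\log K)$, and padding an arbitrary input up to the next composed size costs only a constant factor. To replace this spurious $\log K$ by an absolute constant $\bar c$ as stated, one freezes a sufficiently deep composition depth $k_0$, performs the interpolation once at that level to obtain an \emph{exact} bilinear algorithm of rank $r':=(D_{k_0}+1)\,r^{k_0}$ for $MM(m^{k_0},n^{k_0},p^{k_0})$, and then invokes Theorem~\ref{thdual} on this exact algorithm; the resulting exponent $\omega_{0}=3\log_{(mnp)^{k_0}}r'$ tends to $\omega$ as $k_0\to\infty$, which suffices for the stated form of the bound.

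The principal technical obstacle is the control of the degree $D_k$ under recursion: if fresh indeterminates were introduced at each level, the effective degree would grow like $d^k$ and the interpolation overhead would completely annihilate the gain. Verifying that a single shared $\lambda$ nevertheless isolates the correct target coefficient at $\lambda^{ek}$, without contamination from the deeper-level junk tensors $T_j$, is where Bini's idea really earns its keep. Once that bookkeeping is in place, the interpolation step and the packaging into an exact recursion are essentially routine.
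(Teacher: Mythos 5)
Your proposal is correct and follows essentially the same route as the paper's proof: tensor the APA identity with itself while sharing the single indeterminate $\lambda$, so that the degree grows only linearly in the tensor power while the problem size grows exponentially, then interpolate at degree-plus-one points to extract the exact target coefficient, making the interpolation overhead asymptotically negligible in the exponent. The only cosmetic difference is that the paper organizes the composition as repeated squaring (size $(mnp)^{2^h}$, degree $2^h d$ after $h$ steps) whereas you tensor $k$ times linearly; both yield the same degree bound $O(k)$ at tensor power $k$ and the same limiting exponent.
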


\begin{proof}
First observe that by means of
interpolation we can extend any  APA 
algorithm of border rank $r$
using a polynomial $q(\lambda)$ in $\lambda$,
 of a degree $d$  
 to a $\lambda$-free bilinear algorithm 
for $MM(m,n,p)$ of rank
$(2d+1)r$. 

Now apply such an APA algorithm recursively.
Notice that every recursive step 
squares the original problem size $mnp$ but only doubles the degree of $\lambda$. 
After $h$ steps, we arrive at an APA algorithm of degree $2^hd$
for the MM problem of size $(mnp)^{2^h}$,
and then 
(cf. Theorem~\ref{thdual})
the interpolation factor $2(2^hd+1)$ only implies an increase
of the MM exponent $\omega$ by a tiny factor, which converges to 1 as the number of 
recursive steps grows to the infinity. 
\end{proof}

\section{Inner Product Computation  and
Summation by Means of  APA Techniques
}\label{sbsg}

Next we cover an application of APA techniques beyond MM.

Recall the APA algorithm of Section~\ref{sapa},
let the entries $x_{ij}$, $y_{jk}$, $u_{jk}$, and $v_{ki}$
be integers in the range $[0,2^d)$, and choose 
$\lambda=2^d$. Notice that the 
product $(x_{ij}+\lambda y_{jk})(u_{jk}+\lambda v_{ki})$
fits the length $L$ of the computer word if $L\ge 4d$.
Moreover if the ratio $L/d$ is large enough, we can perform the
APA computations of Section~\ref{sapa} within the
precision $L$. 
\cite[Section 40]{P84b} exploits such observations further 
and devise efficient algorithms for multiplication of
vectors and matrices filled with bounded integers.
Next we recall that technique and  in Example~\ref{exsum}
show its surprising extension.

Suppose that the coefficient vector 
of a polynomial $v(\lambda)=\sum_{i=0}^{n-1} v_i\lambda^{i}$
is filled with integers from the  semi-open segment $[0,2^d)$
of the real axis for a positive integer $d$.
Represent this vector
by the $2^d$-ary integer
$v(2^d)=\sum_{i=0}^{n-1} v_i2^{di}$. 
Generally the interpolation 
to a polynomial of degree $n-1$ requires its evaluation at $n$ knots,
but in the above special case we  only need the evaluation at the single knot $2^d$.
Now suppose that  all coefficients $v_i$ are integers from the semi-open
segment 
$[q,r)$ for any pair of integers $q$ and $r$, $q<r$.
Then we can apply the above recipe to compute the shifted vector 
${\bf u}=(u_i)_{i=0}^{n-1}=(v_i-q)_{i=0}^{n-1}$, having all its components 
in the  semi-open segment $[0,s)$
for $s=r-q$. We can finally recover the vector ${\bf v}$ from ${\bf u}$. 
By following \cite{P80a} and \cite{BP86},
we call this technique 
{\em binary segmentation}. Its history
can be traced back to  \cite{FP74},
and one can even view it as 
an application of the Kronecker map,
although
having specific computational flavor. 

Next we follow \cite[Example 40.3, pages 196--197]{P84b} to compute
the inner product of two integer vectors,
then 
extend the algorithm to
summation, and finally 
list  various other applications of
binary segmentation.

\begin{example}\label{exinner}  ({\rm The inner product of two integer vectors}, cf.\ \cite[Example 40.3]{P84b}.)
Assume two nonnegative integers $g$ and $h$ and two vectors ${\bf u}=(u_i)_{i=0}^{n-1}$
and  ${\bf v}=(v_i)_{i=0}^{n-1}$ with nonnegative integer coordinates
 in two semi-open segments, namely, $[0,2^g)$ for the 
coordinates of ${\bf u}$
and $[0,2^h)$ for the coordinates of ${\bf v}$.
The straightforward algorithm 
for the inner product ${\bf u}^T{\bf v}=\sum_{i=0}^{n-1} u_iv_i$
  first computes the $n$ products
$u_iv_i$ for $i=0,1,\dots,n-1$
and then sums them. This involves $n$ multiplications and $n-1$ additions.
Instead, however, we can just
 multiply a pair of bounded nonnegative integers, 
apply binary segmentation to the product, and output the 
desired inner product. 
Namely, introduce the two polynomials $u(x)=\sum_{i=0}^{n-1} u_ix^i$
and  $v(x)=\sum_{i=0}^{n-1} v_ix^{n-1-i}$. Their 
product is the polynomial $q(x)=u(x)v(x)=\sum_{i=0}^{2n-2} q_ix^i$ 
with integer coefficients in the segment $[0,2^k)$
for $k=g+h+\lceil \log_2 n\rceil$.
The coefficient
$q_{n-1}=\sum_{i=0}^{n-1} u_iv_i$
is precisely the inner product ${\bf u}^T{\bf v}$.
Represent the polynomials $u(x)$ and $v(x)$ by their 
integer values $u(2^k)$ and $v(2^k)$
at the point $2^k$. Clearly, they lie in the semi-open segments 
$r_u=[0,2^{nk+g})$ and $r_v=[0,2^{nk+h})$, 
respectively.
Now compute the integer $q(2^k)=u(2^k)v(2^k)$,
lying in the segment  
$[0,2^{2nk+g+h})$, and 
recover the coefficient $q_{n-1}={\bf u}^T{\bf v}$
by applying binary  segmentation.
\end{example}

\begin{remark}\label{reprod}
We only seek the  coefficient $q_{n-1}$ of the median term 
$q_{n-1}x^{n-1}$
of the polynomial
$u(x)v(x)$. This term lies in  the segment
$[2^{(n-1)k},2^{(n-1)k+g+h})$, and 
the next challenge
is to 
optimize 
its computation. 
Is such a more narrow task substantially simpler 
than the multiplication of two integers lying in the segments
$r_u$ and $r_v$?
\end{remark}

\begin{example}\label{exsum} ({\rm Summation of bounded integers.})
For ${\bf u}=(1)_{i=0}^{n-1}$, $g=0$, and $k=h+\lceil \log_2 n\rceil$ or 
${\bf v}=(1)_{i=0}^{n-1}$, $h=0$, and $k=g+\lceil \log_2 n\rceil$,
the algorithm of Example~\ref{exinner} outputs the sum of $n$ 
integers. 
\end{example}

\begin{remark}\label{relax}  
In the same way as for polynomial interpolation
in the beginning of this section, we can relax the 
assumption of Examples~\ref{exinner} and~\ref{exsum}
that the input integers are nonnegative.
Moreover, the summation of integers can be extended to
the fundamental problem of the 
summation of binary numbers truncated to a fixed precision.
\end{remark}

In Examples~\ref{exinner} and~\ref{exsum},
 multiplication of
 two long integers followed by binary segmentation replaces
 either $2n-1$ or $n$ arithmetic operations,
respectively.
This increases the
 Boolean (bit-wise operation) cost 
by a factor depending on the Boolean cost 
of computing the
 product of 2 integers or, in view of
Remark~\ref{reprod}, of computing the  
median segment in the binary representation
of the product.
The increase is minor if we multiply integers in 
nearly linear Boolean time (see the supporting algorithms for such multiplication in
  \cite{SS71}, \cite[Section 7.5]{AHU74}, \cite{F09}),
but grows if we multiply integers
by applying the straightforward algorithm, 
which uses quadratic Boolean time.
Nonetheless, in both cases
 one could still benefit from using Example~\ref{exsum}
if
the necessary bits of the output integer fit the computer word
(i.e. the bits of the middle coefficient are not part of the overflow of the product),
as long as the representation of the vector as an integer
requires no additional cost.
If
  the output
integer  does not fit the 
word length, 
we can apply the same algorithms to the subproblems of smaller sizes,
e.g., we can apply the algorithms of Examples~\ref{exinner}
and~\ref{exsum} to
compute the inner products of some subvectors
 or partial sums of integers, respectively.

Other applications of binary segmentation include
 polynomial multiplication (that is, the computation of  the 
convolution of vectors)   
 \cite{FP74}, \cite{S82},  some basic linear algebra
 computations \cite[Examples 40.1--40.3]{P84b}, 
  polynomial division  \cite{BP86}, \cite{S82},
 computing polynomial GCD   \cite{CGG84},
and  discrete Fourier transform 
 \cite{S82}. Binary segmentation can be potentially efficient in  
computations with Boolean vectors and matrices.
E.g., recall that Boolean MM is reduced to MM whose input and output entries 
are some bounded nonnegative integers (see  \cite[Proof of Theorem 6.9]{AHU74}). 
Quantized  
tensor decompositions is another promising application area (cf.\   
 \cite{T03}, \cite{O09},
\cite{O10}, \cite{K11}, \cite{OT11}, \cite{GKT13}).

\section {Summary of the Study of the MM Exponents after 1978}\label{sfrth}

In 1979--81 and then again in 1986
the exponent of infeasible MM was significantly decreased
based on combination of trilinear aggregation, Disjoint MM, and APA techniques with
unrestricted use of recursion.

All  supporting algorithms have been
built on the top of  the techniques of 
the preceding papers. 

More and more lenient basic bilinear/trilinear decompositions 
of small rank were chosen for  
 Disjoint MM of small sizes and  since 1986 for small
bilinear/trilinear 
problems  similar to Disjoint MM.
Transition back to MM 
relied on   
 nested recursion,
 consistently
intensified;
 consequently  
acceleration of 
the straightforward algorithm began only with 
MM of astronomical sizes.

By 1987 the power of these techniques seems to be exhausted,
and then the progress has  stopped until 2010. Since then it is
 moving from the bound 2.376 of \cite{CW90} towards 2.37 with the snail's speed.

That direction was prompted by the cited results of the seminal papers
\cite{B80} by Dario Bini and  \cite{S81}  by  
Arnold Sch{\"o}nhage.
Sch{\"o}nhage, however, has concluded the introduction of \cite{S81} with
 pointing out 
that all new exponents of MM were just "of theoretical interest"
because they were valid only 
for the inputs "beyond any practical size" and that 
"Pan's estimates of 1978 for moderate" input sizes  
were "still unbeaten".  Actually, as we can see in Figure~\ref{fig1}, 
the exponent 2.7962 of 1978 for  
$MM(n)$ restricted to $n\le 1,000,000$  has been
 successively  decreased in \cite{P79},
 \cite{P80},  \cite{P81}, and  \cite{P82}  (cf.\ also \cite{P84}), 
although by  
small margins. 
As of December 2016, the exponent 2.7734 of  \cite{P82} 
is still record low for $MM(n)$ with $n\le 1,000,000$. 

 Figures
\ref{fig1} and~\ref{fig2} display
chronological  decrease of the  exponents of
$MM(n)$ for $n\le 1,000,000$ 
and for unrestricted $n$, 
respectively. 
The 
supporting algorithms of Figure
\ref{fig1} rely solely on trilinear aggregation,
and the associated overhead constants 
 are small.
Some of these algorithms have been refined in 
\cite{P84}, \cite{LPS92} and  
implemented
  in \cite{K99} and \cite{K04}.
 
All other algorithms of
Figure~\ref{fig2} (not supporting Figure~\ref{fig1})
employ trilinear aggregation as well
(cf.  \cite[page 255]{CW90}),
but also employ other techniques and
suffer from the curse of recursion.

The figures link each exponent to its recorded
publication in a journal, a conference proceedings, or
as a research report.
As we already mentioned,  the MM exponent of \cite{P78}
significantly decreased 
  in 1979--1981 and 1986.  
It has been 
updated at least 4 times during the single year of 1979:
reaching below the values 2.801 in February in  Research Report 
 \cite{P80}; 2.7799  
(as an APA MM exponent)
in \cite{BCLR79} in June
and (as an MM  exponent)
in \cite{B80};
2.548 in \cite{S81}, and
2.522 in \cite{P81}. Both of the latter two exponents appeared in the   
book of abstracts 
of the conference on the Computational Complexity in Oberwolfach, West Germany, 
organized by Schnorr, Sch{\"o}nhage and Strassen in October
(cf.\ \cite[page 199]{P84} and \cite{S81}). 
The exponent 2.496 of \cite{CW82}
 was reported in October 1981 at the IEEE FOCS'81,
and in  Figure
\ref{fig1} we place it after the exponent 2.517
of the paper \cite{R82} of 1982,
which was submitted in March 1980.
The Research Report version of the paper \cite{CW90}
appeared in August of 1986, 
but in  Figure
\ref{fig2} we place  \cite{CW90} after the paper  
\cite{S86},  published in October of 1986 in the Proceedings of the IEEE FOCS,
because  the paper \cite{S86} has been  
submitted to FOCS'86 in the Spring of 1986 and has been
widely circulated afterwards.
One could complete the historical account
of  Figure
\ref{fig2}
by including the  
exponents
 2.7804 (announced in
the Fall of 1978 in \cite{P78}
 and
 superseded in February 1979 when the paper was
submitted \cite{P80})
and 
2.5218007, which
 decreased the exponent 2.5218128 of \cite{P79}  
and appeared at the end of
the final version of 
 \cite{S81}  in 1981,
that is, before the publication, but after the submission of the 
exponent 2.517 of \cite{R82}. 

We refer the reader 
to  \cite{C82}, \cite{LR83},  \cite{C97},  \cite{HP98}, \cite{KZHP08}, 
  \cite{LG12}, and the references therein
for similar progress in asymptotic acceleration of rectangular MM.

\begin{figure}[ht]
	\caption{$MM(n)$ exponents for $n\le 1,000,000$.}
	\label{fig1}
		\centering
\includegraphics[height=3in,keepaspectratio]{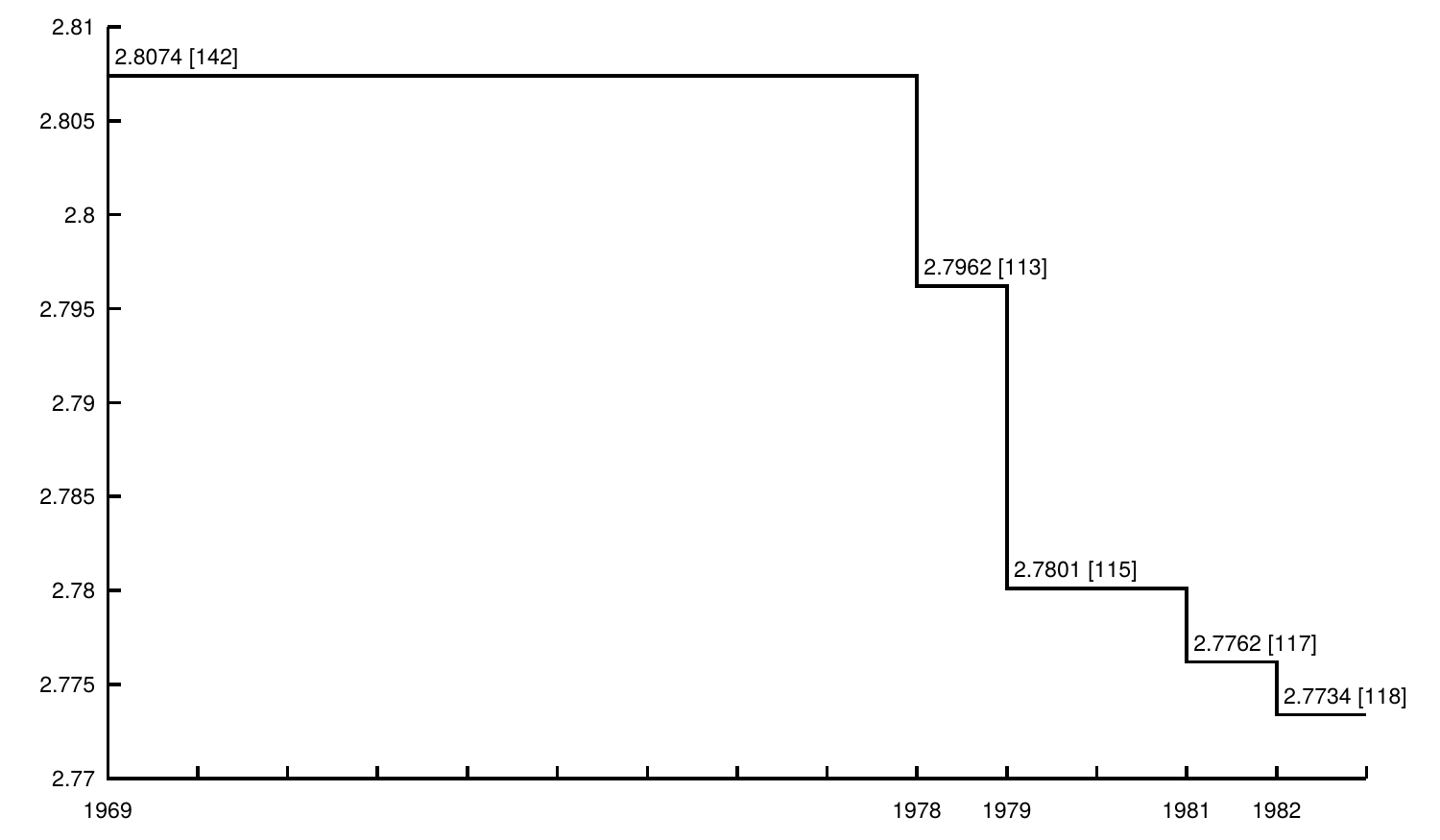}
\end{figure}

\begin{figure}[tbp]  
  \centering
  \caption{$MM(n)$ exponents for unrestricted $n$.}
\includegraphics[height=3in,keepaspectratio]{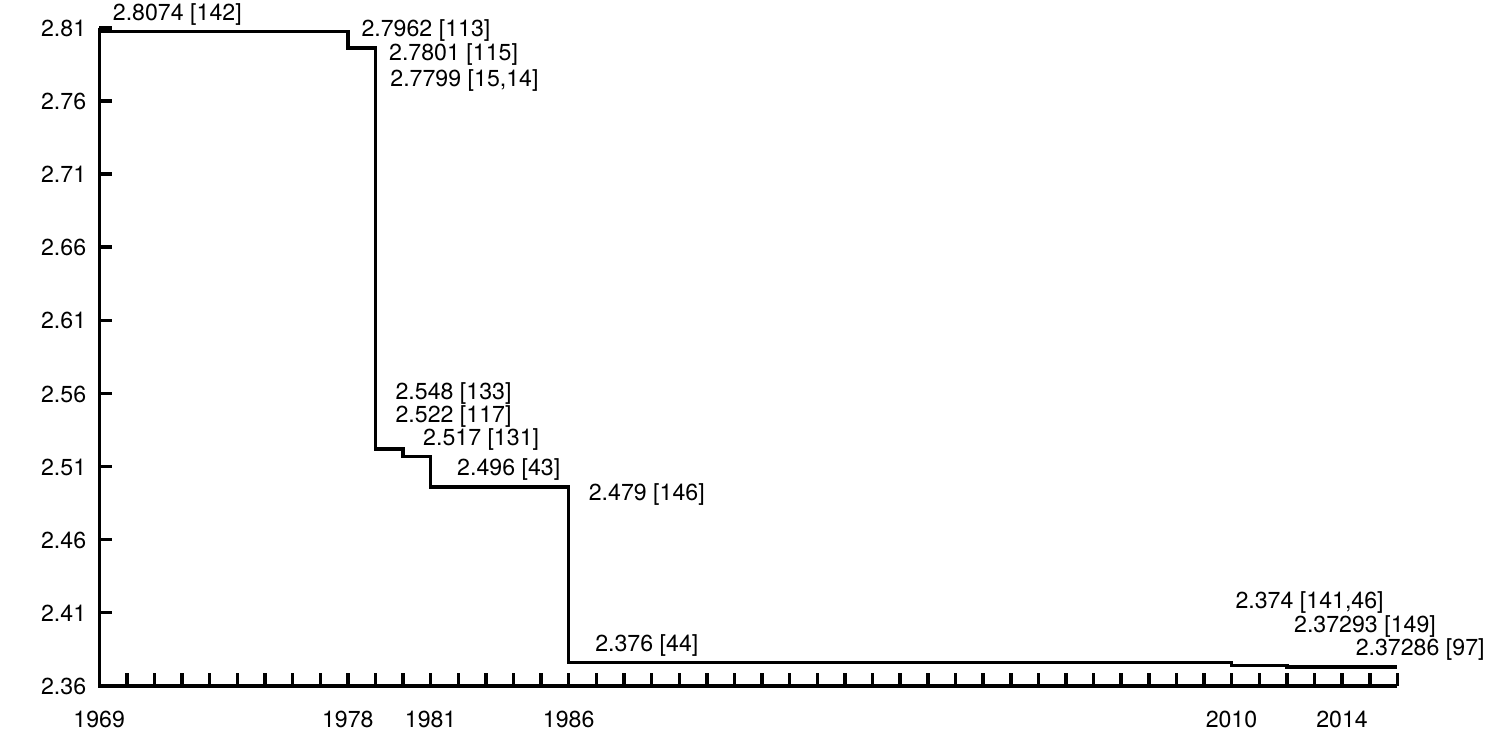}
  \label{fig2}
\end{figure}

\medskip

\section{Applications of Fast MM, Its Links to Other Subject Areas,  Impacts of Its Study,
 and Implementation (Briefly)}\label{sapplcs}

The decrease of the exponent of MM
 implies theoretical acceleration of the solution  of a number of important 
 problems in various areas of computations in Algebra and Computer Science,
such as Boolean MM, computation of paths and distances in graphs,
parsing context-free grammars,
the solution of a nonsingular 
linear system of  equations, 
 computations of the inverse,  determinant, characteristic and minimal 
polynomials, and
various factorizations  
of a matrix. 

 See  \cite{S69}, \cite{BH74}, 
 \cite[Sections 6.3--6.6]{AHU74}, \cite[pages 49--51]{BM75},
 \cite[Chapter 2]{BP94},  \cite{AGM97},  \cite{HP98},  \cite{DI00},
 \cite{L02},  \cite{Z02},
 \cite{YZ04}, \cite{YZ05}, \cite{YZ05a},  \cite{KSV06},
  \cite{BJS08}, \cite{KZHP08},
 \cite{AP09}, \cite{DP09}, \cite{Y09}, \cite{SM10}, 
 \cite{LG12},
 \cite{ASU13},
 \cite{S15}, \cite{SY15},  \cite{N16}, \cite{NS16}, \cite{NS16a}, \cite{P16},
and the bibliography therein and  notice that some new important
applications have been found very recently, e.g., in 4 papers at ISSAC 2016.

The reduction of other computational problems to MM increases
the overhead, which is already immense for the algorithms 
supporting the record exponent of MM. Such a reduction, however, 
can still be valuable because it reveals some links 
of independent interest and 
because  by applying fast  algorithms or even the straightforward algorithm for
feasible MM
one can take advantage of using block matrix software and parallel acceleration.

Research on fast feasible MM had a variety of
 feasible links to other subject areas.
The work on bilinear and trilinear decompositions for fast MM was an
important part of the study of such decompositions in the general area of Algebraic Computations and
has led to new insights and 
 techniques.

\begin{itemize}
\item
We have already cited historical importance of the demonstration in 1972 in~\cite{P72}
 of the power
of tensor decompositions and 
valuable applications of 
duality to the design of efficient bilinear algorithms. 
\item
Trilinear aggregation
was also a surprising demonstration of the power of ag\-gre\-ga\-tion/dis\-ag\-gre\-ga\-tion methods.
\item
 More applications of this kind can follow in the future, such as 
a quite unexpected application of APA techniques to
the computation of inner products and
summation presented in Section~\ref{sbsg}.
\item
It may be surprising, but apart from trilinear aggregation and the
APA method, the advanced and amazing techniques 
developed for decreasing the exponent of infeasible MM had no theoretical (as well as practical) applications 
to other areas of computations or algebra (and have made no impacts on actual work on MM as well).
\end{itemize}
 
 Of course, such impacts on the practice of performing this fundamental operation of modern computations 
are the main motivation and goal of the study of fast MM, and indeed
 recursive bilinear algorithms based on $2\times 2$ 
Strassen's and particularly  Winograd's brilliant designs of
Examples~\ref{ex0}  and~\ref{ex1}, respectively,
are a valuable part of modern software for MM.
 
In Section \ref{sfcs} we commented on numerical stability issues for fast 
feasible MM,
and we refer the reader to
 \cite{B88}, \cite{H90},
\cite{DHSS94}, \cite{DGP04}, \cite{DGP08}, \cite{BDPZ09},
\cite{DN09},
 \cite[Chapter 1]{GL13}, \cite{BB15}, \cite{BBDLS15}, \cite{BDKSa},
and the bibliography therein for their previous and current numerical implementation.
In the next section we discuss symbolic application of the latter  algorithm
(WRB-MM) in some detail.

It is very encouraging  to observe dramatic increase of the activity 
in  numerical and symbolic implementation of fast MM in recent years,
towards decreasing communication cost and improving parallel implementation, 
  in good accordance with the decrease of the arithmetic cost.

\section{Fast methods in exact computational linear algebra}\label{sec:exactcomp}

Next we discuss 
applications and implementations of
\emph{exact} fast matrix multiplication (MM). 
As we mentioned in Section \ref{ssbj}, we first review how most of the 
exact
linear algebra can be reduced to MM over small
finite fields. Then we highlight the differences in the design of
approximate and exact implementations of MM taking into account nowadays
processor and memory hierarchies.

\subsection{Acceleration of computations via reductions to MM}

The design of matrix multiplication routines over a word size finite
field is the main building block for computations in exact dense
linear algebra, represented
with  
coefficient domains of two kinds: word-size discrete entries (mainly in finite
fields of small cardinality) and variable size entries (larger
finite fields, integers, or polynomials). 
Indeed, efficient methods for the latter task usually reduce it to
the former one: reductions are made by means of
evaluation/interpolation (Chinese remaindering over the integers) or
 by means of lifting small size solutions (via Hensel-like or high-order
lifting~\cite{Storjohann:2005:HighOrder}), see,
e.g.,~\cite{Kaltofen:2011:EACM,Dumas:2012:HFF,pernet:tel-01094212} for
recent surveys. 

Over word-size finite fields, efficient implementations of MM have
been obtained by means of effective  
algorithmic reductions
originated in the complexity analysis.
In this case  reductions have been obtained directly to MM. 
Already Strassen in 1969  extended MM 
to matrix inversion (we sketch this in Examples~\ref{exTRSM} and~\ref{exLU}
below), then  Bunch and Hopcroft~\cite{BH74}  extended 
 MM to invertible LUP factorization, and further extensions followed in the
eighties, for instance, in~\cite{Ibarra:1982:LSP}
 to the computation of 
the matrix rank and of Gaussian
elimination. Efficient algorithms
whose complexity is sensitive to the rank or to the rank profile have
only very recently been
discovered~\cite{Jeannerod201346,DPS16,SY15}. 
One of the latest
reductions to MM was that of the characteristic polynomial, which
has extended the
seminal work of~\cite{Keller-Gehrig:1985:FAC} 
more than thirty years afterwards~\cite{Pernet:2007:FAC}.

\begin{example}\label{exTRSM} {\rm Triangular system solving by means of
  reduction to matrix multiplication.}\\
Denote by $X=TRSM(U,B)$ the solution  to the linear matrix equation $UX=B$
with a matrix $B$
on the right-hand side and an upper triangular invertible matrix $U$. 
Recursively cut the matrices $U$, $B$ and $X$ in halves as follows: 
$U=\begin{bmatrix} U_1 & V\\ & U_2\end{bmatrix}$, 
$X=\begin{bmatrix} X_1 \\ X_2\end{bmatrix}$, and 
$B= \begin{bmatrix} B_1 \\ B_2\end{bmatrix}$;
then obtain an efficient reduction of the solution to MM by means of the following algorithm:
\begin{enumerate}
\item Recursively compute $X_2=TRSM(U_2,B_2)$;
\item Compute $B'_1=B_1-VX_2$; // via fast MM
\item Recursively compute  $X_1=TRSM(U_1,B'_1)$.
\end{enumerate}
The only operations performed are fast MMs.
Asymptotically the low complexity is preserved, and in practice
this reduction can be made very efficient, even in the case of exact 
computations,
where intermediate reductions might occur, as shown, e.g.,
in~\cite[\S~4]{DGP08}.
\end{example}

\begin{example}\label{exLU} {\rm Reduction of  LU factorization to MM.}\\
For simplicity, consider an invertible matrix $A$ having generic rank profile (that is, having \emph{all
 its leading principal minors also invertible}).
Recursively cut $A$ into halves in both dimensions,  that is, represent it as $2\times 2$ block matrix,
$A=\begin{bmatrix} A_1 & A_2 \\ A_3 & A_4 \end{bmatrix}$.
Then an efficient triangularization $A=LU$ can be computed by means of the
following algorithm:
\begin{enumerate}
\item Recursively  compute $L_1 U_1 = A_1$; // Triangularization of the upper
  left block
\item  Compute $G=TRSM(U_1,A_3)$; // $G$ is such that $GU_1=A_3$
\item  Compute $H=TRSM(L_1,A_2)$; // $H$ is such that $L_1H=A_2$
\item  Compute $Z=A_4-GH$; // via fast MM
\item Recursively  compute $L_2,U_2=Z$;
\item Return $L=\begin{bmatrix} L_1 &\\ G & L_2\end{bmatrix}$ and
  $U=\begin{bmatrix} U_1 & H\\ & U_2\end{bmatrix}$.
\end{enumerate}
Once again, the only operations performed in this algorithm  are MMs,
making it  efficient as long as MM is efficient. This reduction to MMs
would remain efficient in the
extension 
to  the more general
case where rank deficiencies are allowed and pivoting is applied~\cite{DPS16}.
\end{example}

\begin{example}\label{exDixon} {\rm Sketch of linear system solving in arbitrary
  precision.}\\ 
Next we  accelerate the  solution of a linear system of equations 
with arbitrary precision by using the two previous reductions to
 exact MM. 
 The idea is to solve the linear system 
modulo a small prime $p$, by intensively using fast exact MM, and then to reuse this
factorization in Hensel-like $p$-adic lifting producing iterative refinement of the
solution modulo $p^k$.\footnote{Hensel-like $p$-adic lifting is a major tool of computer algebra, which has striking similarity with
the classical  algorithm of iterative refinement in numerical linear algebra.} 
This leads us to the following algorithm:

\medskip

Successively compute
\begin{enumerate}
\item  $L_p,U_p \equiv A \mod p$; // Triangularization modulo a small prime
\item   $x_0 \equiv TRSM(U_p, TRSM(L_p, b) \mod p) \mod p$;
\item $b_1 = \frac{b-Ax_0}{p}$; // this computation is over \ZZ
\item $x_1 \equiv TRSM(U_p, TRSM(L_p, b_1) \mod p) \mod p$;
\item $b_2 = \frac{b_1-Ax_1}{p}$; //  this computation is over \ZZ
\item $x_2 \equiv TRSM(U_p, TRSM(L_p, b_2) \mod p) \mod p$; 
\item $\ldots$
\end{enumerate}
One can easily show that 
$b \equiv A( \sum_{i=0}^{k-1} x_i p^i ) \mod p^k$.

Now recall that  we can recover unique rational number $r=\frac{a}{b}$
from its $p$-adic series representation
truncated at $p^k$ as soon as $k$ is such that $p^k$ is larger than
$2ab$ and the denominator $b$ is coprime to
$p$.\footnote{The recovery of a rational $r$ from its $p$-adic truncated series
  is called \emph{rational number reconstruction} which is an accelerated Euclidean 
  algorithm, see, e.g., 
 \cite[\S~5.10]{GG13} for more details.} Namely, by combining Cramer's rule and Hadamard's bound on determinants,
we represent solution values $x_i$ as rational numbers with bounded denominators
and then recover them from their truncated $p$-adic series representation 
by applying sufficiently many steps of iterative refinement.
 More details on actual values of
$p$ and $k$ can be found in Dixon's seminal paper~\cite{Dixon:1982:Pad}.
We can prove (and this is important for practical application)
that the overhead of this exact iterative refinement is quite a
 small constant. We  show this in
Figure~\ref{fig:Dixon}, for which we used the
LinBox\footnote{\url{https://github.com/linbox-team/linbox}} exact linear
algebra library: despite quite large coefficient growths (for instance, up to
forty thousand bits for the solution to a  $8000\times{}8000$ random
integer matrix with $32$ bits entries), approximate and exact
arbitrary precision times remain essentially proportional. 
\end{example}
\begin{figure}[ht!]
\centering
\caption{Comparison of approximate and arbitrary precision linear system solving
  on an Intel Xeon W3530  @2.80GHz.} 
\label{fig:Dixon}%
\includegraphics[width=\textwidth]{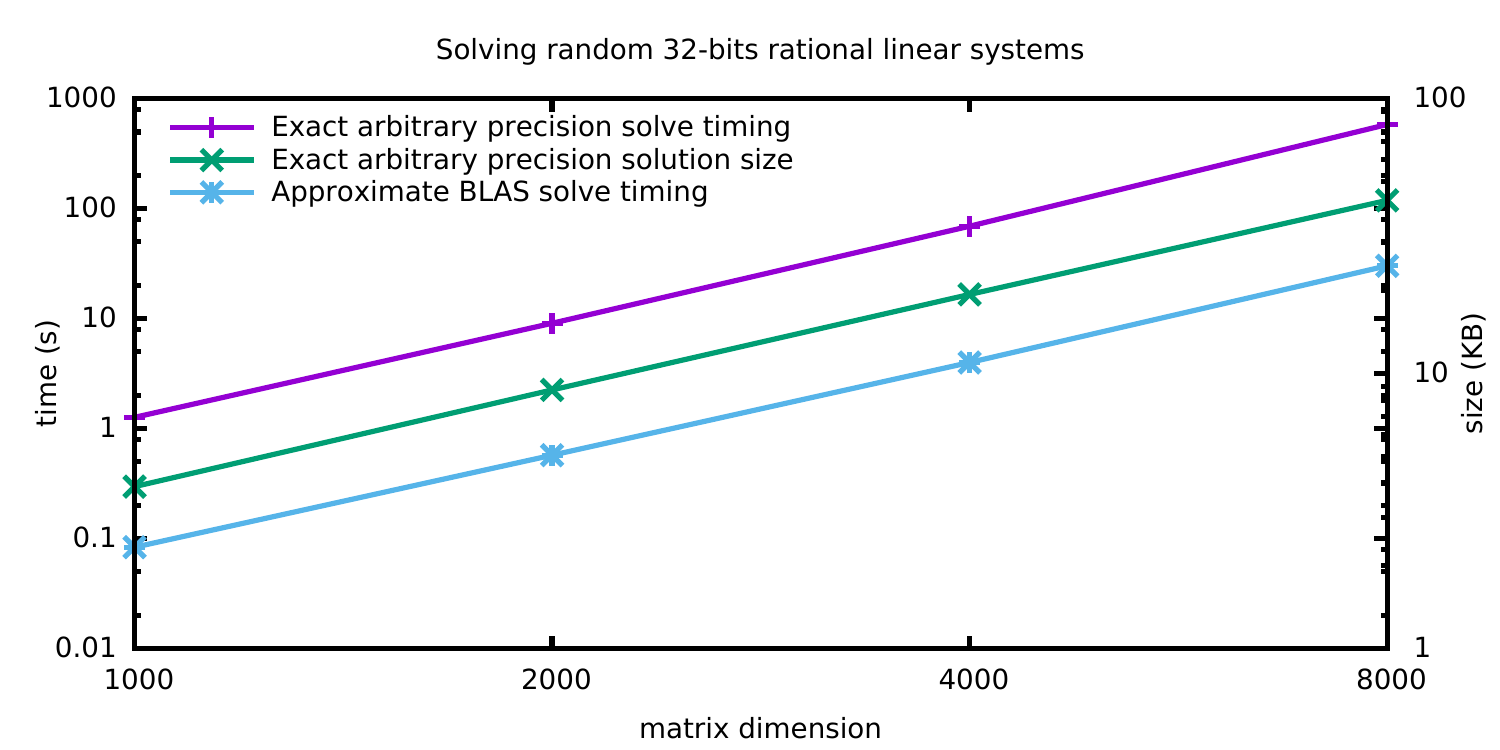}
\end{figure}

We achieve  good practical performance of
computations in linear
algebra  by extensively applying
 reductions to MM, which we mostly perform exactly over small finite fields.

On the one hand, nowadays machine architecture, with their
memory hierarchy, is also well-adapted to the highly homogeneous
structure of the straightforward 
MM.
This is true for
numerical routines, where 
the stability issues have %
been worked out
for fast MM
in~\cite{BL80,H90,DDHK07,BBDLS15}.
This is also true
for exact routines where, 
symmetrically, the
costly reductions (modular or
polynomial) have to be delayed as much as possible.

On the other hand, this straightforward algorithm remains faster in
practice only for small matrices, but the larger memory capacity and the increase
in the number of computing units makes it possible to handle, on desktop
computers, dense matrices  of dimensions in 
several hundred
thousands.
As the practical threshold between straightforward and fast algorithms
is often for matrices of dimensions 
about 500,
several fast routines can be very
effectively combined, yielding the fastest implementations to date, as explained
in the next section.

\subsection{Design of fast exact matrix multiplication over word-size prime
  fields}

The design of matrix multiplication routines over a word size finite
field is the main building block for computations in exact dense
linear algebra.
In practice, a turning point has been the introduction of the
following principles in~\cite{jgd:2002:fflas}:
\begin{enumerate}
\item finite field arithmetic is reduced to integer arithmetic with
  delayed or simultaneous modular reductions;  
\item integer arithmetic is performed by floating point units (in
  order to take advantage of SIMD instructions and of numerical
  routines development -- BLAS);
\item computations are structured in blocks 
in order to optimize the use
  of the memory hierarchy of current architectures;
\item asymptotically fast algorithms are used, mostly
recursive bilinear algorithm  for MM based on  Winograd's
 $2\times 2$ MM of Example~\ref{ex1} (see also~\cite{DHSS94,jgd:2002:fflas,BDPZ09}, hereafter
we  denote this algorithm \strassenwinograd)%
, but also Kaporin's~\cite{K04} and
Bini-Capovani-Lotti-Romani's ~\cite{BCLR79,jgd:2015:bini} algorithms are used.
\end{enumerate}

The idea is to convert a finite field matrix into its integer
representation, then perform the multiplication over the integers
(potentially using, exactly, a floating point representation) and
convert back the result into the finite field. 
First, a floating point representation allows us to use a
sustainable code that rely on more widely supported numerical linear
algebra routines. 
Further, on the one hand, 
 machine division (for reductions)
and integer units are slower than the respective
floating point operations (for
instance, not all
integral operations have yet SIMD support~\cite{HoevenLQ14}).
But, on the other hand, as the computed results are exact,
one can apply  the fastest implementations of
the
asymptotically fast algorithms 
not worrying about
numerical stability problems
(which, even though never serious for fast MM,
as this has been proved
in~\cite{BL80,DDHK07}, can require some
extra work). 

Over finite fields, with arbitrary precision or with polynomials,
it is nowadays much faster to perform additions or even
multiplications than modular or polynomial reductions.
To take the most of the machine representation, reduction
is delayed or grouped using the following techniques:
\begin{definition}\label{defdelayed}
\begin{enumerate}\renewcommand {\theenumi}{(\roman{enumi})}
\item \emph{Delayed reduction} is the action of replacing a classical
  dot product with reductions, $\sum_{i=0}^n Reduce(a_i b_i)$, by an algorithm
  that reduces only from time to time, for instance, when the internal
  representation cannot hold the growth anymore:
  \begin{itemize}
  \item Compute $s=\sum_{i=0}^{k} a_i b_i$; $Reduce(s)$; \\
  \item For $j=1$ to $n/k-1$ compute \\
    \begin{description} \item $s = s + \sum_{i=jk+1}^{(j+1)k} a_i
      b_i$; $Reduce(s)$; \\
    \end{description}
  \item done;
\end{itemize}
\item \emph{Simultaneous modular reduction} is the action of
  performing a single reduction on several coefficients at once:
  in other words, the idea is to replace independent reductions (for instance
  $Reduce(a)$;$Reduce(b)$;$Reduce(c)$) by a single reduction on
  grouped elements (use
  $t=Group(a,b,c)$;$Reduce(t)$;$a,b,c=Ungroup(t)$ instead), see, e.g., \cite{DumFouSal11}
  for more details.
\end{enumerate}
\end{definition}

Moreover, in practice, for exact computations, recursion is important when 
fast algorithms are used: while tiling allows a static placement of blocks
adapted to the memory hierarchy, larger blocks allow 
faster acceleration but also more delays in the application
of reductions (modular reduction, polynomial reduction,
etc.)~\cite{Dumas:2014:parPLUQ}.
Further, the fast algorithms can be used for a few levels of
recursion and then switch back to the, statically placed,
straightforward algorithm on smaller matrices.
Eventually (nowadays, on a current personal computer, this can
typically be between n=500 and n=1000), the exact algorithm sketched
above is based on the numerical BLAS and then applied without any
further recursion. 

There is actually a cascade
of algorithms where, at each recursive cutting, a new choice of the
best suited variant is performed before the final switch~\cite{F74,14BDGPS}.
This cascade is usually non-stationary, i.e., a different scheme, bilinear or
not, can be chosen at every level of recurrence.
For instance, a meta-algorithm starts by partitioning the matrices 
into four blocks followed by application of 
{\strassenwinograd}.
This meta-algorithm is called
on each of the seven multiplications of the blocks; at the recursive
call the meta-algorithm can decide whether to re-apply itself
again to a $2\times 2$ block matrix or to switch either 
to a (2,2,3) APA algorithms (which in turns will also call this meta
algorithm recursively) or to the straightforward algorithm, etc.

This impacts the frequency at which the modular reductions can be
delayed. For instance, with a classical
MM
and 
elements of a prime field modulo $p>2$ represented as integers in
$\{\frac{1-p}{2}\ldots\frac{p-1}{2}\}$, on a type with a mantissa of $m$
bits, the condition is that the modular 
reduction in a scalar product of dimension $k$ can be delayed to the
end as long as $k\left(\frac{p-1}{2}\right)^2<2^m$.

When applying $\ell$ recursive levels of \strassenwinograd 
algorithm, it can be showed instead that
some intermediate computations could grow above this bound
\cite{DGP08}, and the latter condition becomes  
$9^\ell \lfloor\frac{k}{2^\ell}\rfloor\left(\frac{p-1}{2}\right)^2<2^m$. 
 This requires 
to perform by a factor of about $(9/2)^\ell$ more
modular reductions.

Example 
of sequential speed obtained by the 
fgemm routine algorithm of the \linbox library\footnote{Within its
  \fflas module~\cite{DGP08},
  \url{https://github.com/linbox-team/fflas-ffpack}}~\cite{Dumas:2002:icms}
is shown in Table~\ref{tab:fgemm}.

\begin{table}[ht!]
\caption{Effective Gfops ($2n^3/time/10^9$) of matrix
  multiplications: \linbox \fgemm vs \openblas \texttt{(s|d)gemm} on one core of a Xeon E5-4620 @2.20GHz}
\label{tab:fgemm}
\renewcommand{\arraystretch}{1.3}
\center 
\begin{tabular}{lrrrrr}
\hline
$n$& 1024 & 2048 & 4096 & 8192 & 16384\\
\hline
\openblas \sgemm  & {\bf 27.30}& {\bf 28.16}& 28.80 & 29.01 & 29.17\\
$\BigO{n^3}$-\fgemm Mod 37 & 21.90& 24.93 & 26.93& 28.10& 28.62\\
$\BigO{n^{2.81}}$-\fgemm Mod 37 & 22.32& 27.40 & {\bf 32.32}& {\bf 37.75} & {\bf 43.66}\\
\hline
\openblas \dgemm  & 15.31 & 16.01& 16.27 & 16.36 & 16.40\\
$\BigO{n^3}$-\fgemm Mod 131071 & 15.69 & 16.20 & 16.40 & 16.43 & 16.47\\
$\BigO{n^{2.81}}$-\fgemm Mod 131071 & {\bf 16.17}& {\bf 18.05}& {\bf
  20.28}& {\bf 22.87}& {\bf 25.81}\\  
\hline
\end{tabular}
\vspace{5pt}
\end{table}

The efficiency of the \fgemm routine 
is largely due to the
efficiency of the BLAS,
 but as the latter are not using fast
algorithms, exact computations can be faster.
Modulo 37 elements are stored over single precision floats,
 and the \sgemm
subroutine can be used, whereas modulo 131071 elements are stored using double
precision float,
 and the \dgemm subroutine is used. 
Table~\ref{tab:fgemm} first shows that the overhead of performing the modular
reductions in the $\BigO{n^3}$ implementations is very limited if the
matrix is large enough. Then, when enabling
\strassenwinograd $\BigO{n^{2.81}}$ algorithm,  a speed-up of up to $40\%$ can be
attained in both single and double precision arithmetic.
More recently, it has been shown also how
algorithm~\cite{BCLR79} by Bini et al.  could be efficiently put
into practice~\cite{jgd:2015:bini},
also offering some interesting speed-up
for prime fields of size near 10 bits.

\subsection{Memory efficient schedules}
\strassenwinograd algorithm requires external temporary memory
allocations 
in order to store the intermediate linear combinations of blocks. 
With large matrices and current architectures, this can be penalizing
because the memory availability and access to it dominate the overall 
computational costs. 
It is therefore crucial to reduce as much as possible the extra memory
requirements of fast methods. This can be done via a careful
scheduling of the 
steps of \strassenwinograd algorithm:
it is not mandatory to perform 8 pre-additions,  7 
multiplications and  7 post-additions in that order, 
with temporary memory allocations for each of these 22 steps.
Depending on the associated dependency graph, one can choose instead
to reduce the number of allocation by following this graph and overwriting
already allocated memory when the associated variable is not used 
any more. 

For the 
product 
of  $n\times{}n$ matrices, without accumulation, $C\leftarrow{}A \times{} B$,
 \cite{DHSS94} proposed a
schedule requiring, apart from $C$, two extra temporary blocks of size
$\frac{n}{2}\times\frac{n}{2}$ at the first recursive levels, two
extra temporary blocks of size $\frac{n}{4}\times\frac{n}{4}$ for each
of the seven recursive calls, etc. Overall, the needed extra memory is
bounded by $\frac{2}{3}n^2$. 
For the product with accumulation, $C\leftarrow{} C+ A \times{} B$,
for more than ten years the record was three temporaries with an extra
memory bounded by $n^2$~\cite{HussLederman:1996:ISA}, but this was
recently improved to two temporaries 
in~\cite{BDPZ09}. 
Notice that~\cite{BDPZ09} proposed also some schemes
requiring smaller extra memory 
(that can actually be made quite close to zero), with
the same asymptotic complexity as \strassenwinograd algorithm,
although with a larger constant
overhead factor in arithmetic operations. 
Recently M. Bodrato 
proposed a
variant of \strassenwinograd algorithm, which is
symmetric and more suitable to
squaring 
matrices, but 
which uses similar schedules, and therefore keeps
the extra requirements%
~\cite{Bodrato:2010:SMM}.
This is not the case in~\cite{HSHG16} where no extra memory is
required if only one or two recursive levels of fast MM are used, but
at the cost of recomputing many additions.

Finally, in the case of the APA algorithm~\cite{BCLR79} by Bini et al.,
the requirements are also of two temporaries in the product without
accumulation~\cite{jgd:2015:bini}. 
From~\cite{HussLederman:1996:ISA}, new schedules have usually been
discovered by hand, but with the help of a pebble game program, 
which discards rapidly the wrong schedules 
and verifies formally the correct ones.

\subsection{Tiny finite fields}\label{sssec:tiny}
The practical efficiency of 
MM depends 
greatly on
the representation of field elements. 
Thus
we present three kinds of compact representations for 
the elements of a
finite field with
very small cardinality:
 bit-packing 
(for $\F_2$), bit-slicing (say, for 
$\F_3, \F_5, \F_7, \F_{2^3}$, or $\F_{3^2}$), and Kronecker
substitution. These representations are
designed to allow
efficient linear algebra operations, including 
MM:
\begin{definition}\label{deftiny} Compact representations for small
  finite fields.
\begin{enumerate}\renewcommand {\theenumi}{(\roman{enumi})}
\item Over $\F_2$, the method of \emph{the four
    Russians}~\cite{MR0269546}, also called \emph{Greasing}, can be
  used as follows~\cite{AlbrechtBH10}: 
\begin{itemize}
\item A 
64-bit machine word can be used 
in order to represent a row vector of dimension~64.
\item %
Multiplication 
of an $m\times k$ matrix  $A$ by an $k\times n$
matrix $B$ can be done by first storing all $2^k$ $k$-dimensional linear
combinations of rows of $B$ in a table. Then the $i$-th row of the
product is copied from the row of the table indexed by the $i$-th row of $A$.
\item By ordering indices of the table according to a binary Gray Code, each
row of the table can be deduced from the previous one, using only one row
addition. This 
decreases the bit-operation count for
building the table from $k2^kn$ to $2^kn$.
\item Choosing $k=\log_2n$ in this method implies $MM(n)=\BigO{n^3/\log n}$
over~$\F_2$. In practice, the idea 
is once again to use a cascading
algorithm: at first some recursive steps of fast 
MM is performed,
and then, at a size small enough,
 one should
 switch to the greasing.
\end{itemize}
\item
{\em Bit-slicing} consists in representing an $n$-dimensional vector of $k$-bit sized
coefficients by using $k$ binary vectors of dimension
$n$~\cite{BooBra09}. In particular, one can apply Boolean word instruction
in order to perform arithmetic on 64 dimensional vectors.
\begin{itemize}
\item Over $\F_3$, the binary representation $0 \equiv [0,0], 1\equiv
[1,0], -1 \equiv [1, 1]$ allows 
us to add and subtract two elements in 6 Boolean operations:
$$\begin{array}{ll}
\text{Add}([x_0,x_1],[y_0,y_1]) : & s \leftarrow x_0\oplus y_1 , t \leftarrow x_1 \oplus y_0\\
      & \text{Return} (s\wedge t, (s\oplus x_1) \vee (t\oplus y_1))\\
\text{Sub}([x_0,x_1],[y_0,y_1]) : & t\leftarrow x_0\oplus y_0 \\
                                 & \text{Return}(t\vee (x_1\oplus y_1), (t\oplus
                                 y_1)\wedge(y_0\oplus x_1))
\end{array}
$$
\item Over $\F_5$ (resp. $\F_7$), a redundant representation $x=x_0+2x_1+4x_2 \equiv
[x_0,x_1,x_2]$ allows 
us to add two elements by using  20 (resp. 17) Boolean operations,
negate in 6 (resp. 3) Boolean operations, and double by using  5 (resp. 0) Boolean operations.
\end{itemize}

\begin{table}[htbp]\center
\caption{Boolean operation counts for basic arithmetic by using bit-slicing}
\begin{tabular}{|l|ccc|}
\hline
         & $\F_3$ & $\F_5$ & $F_7$\\
\hline
Addition & 6 & 20 & 17\\
Negation & 1 & 6 &  3\\
Double   &   & 5 &  0\\
\hline
\end{tabular}
\end{table}
\item {\em Bit-packing} consists in representing a vector of field elements
  as an integer 
that 
 fits in a single machine word by using a $2^k$-adic
  representation: $$(x_0,\dots,x_{n-1})\in \F_q^n \equiv
  X=x_0+2^kx_1+\dots +(2^k)^{n-1}x_{n-1} \in \ZZ_{2^{64}}$$  
Elements of extension fields are viewed as
polynomials and stored as the evaluation of this polynomial at the
characteristic of the field. The latter evaluation is called {\em
  Kronecker substitution}~\cite{MR2500374}.
Once we can pack and simultaneously reduce coefficients of
the finite
field in a single machine word, the obtained parallelism can be used
for 
MM. Depending on the respective sizes of the
matrices in the multiplication,
 one can pack only the left operand,
only the right one, or both \cite{DumFouSal11}. 
Then, over 
the field extensions, fast floating point operations can also be used on
the Kronecker substitution of the
elements. 
\end{enumerate}
\end{definition}

All these methods improve in fact the base case of dense linear
algebra, when fast methods are not competitive anymore. 
As already mentioned, the generic cascading idea applies: 
perform 
at first some recursive steps
fast, 
decreasing the matrix
dimension, 
and, at a small enough size, switch to the (improved) classical triple
loop method.

\subsection{Parallelization}
Now, we focus on the design of a parallel 
MM 
routine, 
which computes the matrix product $AB$
based on the \strassenwinograd sequential
algorithm. In order to parallelize the computation at the coarsest
grain, the best approach is 
to apply first  a classical 
block algorithm, generating a prescribed number of independent tasks, 
and  then each of them
will use the sequential \strassenwinograd
algorithm~\cite{Dumas:2014:parPLUQ,Dumas:2015:pfgemm}. 
There, the parallel algorithm is recursive and splits the largest of
either the row dimension of $A$ or the column dimension of $B$, to form
two independent tasks. The granularity of the split is recursive and
terminates whenever the number of created tasks 
becomes larger than the
number of computing resources (e.g., the total number of cores). 
This maximizes the size of the blocks, and therefore the benefit of
\strassenwinograd algorithm, while ensuring a large enough number of
tasks for the computing resources. 

\begin{figure}[ht!]
\centering
\caption{Speed of exact and numerical matrix multiplication routines
  on a 32 cores Intel Xeon E5-4620 2.2Ghz (Sandy Bridge) with 16384KB
  L3 cache~\cite{Dumas:2015:pfgemm}.} 
\label{fig:pfgemmtime}%
\includegraphics[width=\textwidth]{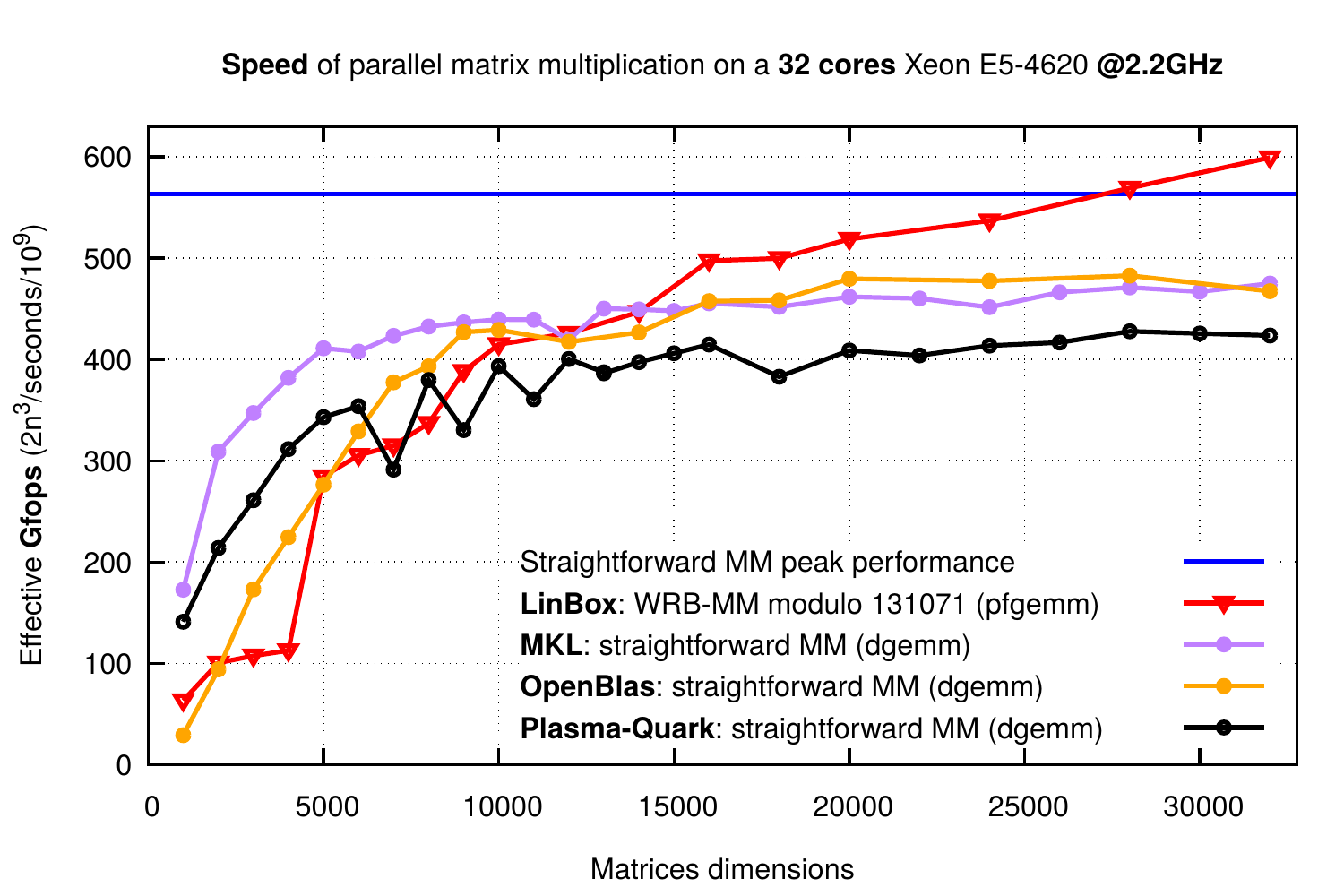}
\end{figure}

Figure~\ref{fig:pfgemmtime} shows the
computation time of various 
MM algorithms: the numerical
\dgemm implementation of \plasmaquark, \openblas and
\texttt{Intel-MKL} as well as  the 
implementation of \pfgemm of \linbox using the \openmp-4.0 data-flow
model. 
Contrary to \texttt{MKL}, \openblas or \plasmaquark, this \pfgemm routine uses the 
above sketched splitting strategy with \strassenwinograd. 
This implementation is run over the finite field $\ZZ/131071\ZZ$
or with real double floating point numbers.  We first notice
that most routines perform very similarly. More precisely,
\texttt{Intel-MKL} \dgemm is faster on small matrices, but the effect
of \strassenwinograd algorithm makes \pfgemm faster on larger
matrices, even 
in the finite field where additional modular reductions
occur.

\section{Some Research Challenges}\label{sprimp}

The field is still in progress, and here are
some current  observations.

(i) The implementations of 
fast trilinear aggregation algorithms by Kaporin in \cite{K99} and
\cite{K04} are relatively little applied in practice so far, 
although they have important advantages versus other algorithms now in use:
being at least as fast and more stable numerically
and allowing highly efficient parallel implementation,
they require much less memory. This is because the algorithms 
of  \cite{K99} and \cite{K04} are defined by 
trilinear decompositions with
{\em supersparse coefficient  matrices} 
$A=(\alpha_{ij}^{(q)})_{(i,j),q}$, $B=(\beta_{jk}^{(q)})_{(j,k),q}$, and $C=(\gamma_{ik}^{(q)})_{(i,k),q}$
 of (\ref{eqcoeff}), which 
is a general property of competently implemented 
trilinear aggregation algorithms. The implementations in 
\cite{K99} and \cite{K04} were intended to be 
initial rather than final steps and were supposed to leave some room for further amelioration.
In particular the algorithm of \cite{P82}, also relying on trilinear aggregation,
has similar features, but supports even a smaller exponent,  and can be a basis 
 for further 
progress in the implementation of fast feasible MM.

(ii) The more recent algorithms of \cite{S13}, obtained by means of computer search, 
are highly promising because they beat the exponent $\log_2(7)$
already for various small problems $MM(m,n,p)$
where $\min\{m,n,p\}=2$ and $\max\{m,n,p\}\le 6$.
 Their 
coefficient matrices $A$, $B$, and $C$ of (\ref{eqcoeff})
are quite dense, however, and their tested performance is inferior to 
recursive bilinear algorithms based on Examples~\ref{ex1}  and~\ref{ex0}.
This leads to the challenge of devising MM algorithms that
would combine the best features of the algorithms of \cite{K99},
\cite{K04}, and \cite{S13}.

(iii) Numerical instability 
of APA algorithms does not prevent them from being efficient in
symbolic computations, but so far only the rudimentary algorithm 
of \cite{BCLR79} has been implemented~\cite{jgd:2015:bini}, while much simpler 
and much more efficient ones are ignored  (cf., e.g., our algorithm 
in Section~\ref{sapa}).

Some researchers in matrix computations still view decreasing the current record 
MM exponent of about 2.37 towards the lower bound 2
as the major theoretical challenge. For the  state of affairs in this area
 we refer the reader to our Figure~\ref{fig2}
and the mostly negative  results in \cite{ASU13}, \cite{AFLG14}, and
\cite{BCCG16}.

We, however,  consider breaking the barrier of 2.7733 for 
the realistic exponent of $MM(n)$, $n\le 1,000,000$,
a more important challenge. The exponent 2.773 stays unbeaten since 1982 (that is, longer than Copper\-smith--Winograd's barrier of 1986,
broken by Stothers in  2010),\footnote{The title of \cite{VW14} is a little deceptive. 
}
and its decrease should require more powerful tools 
than the acceleration of infeasible MM
because of the limitation on the use of recursive bilinear algorithms.
We  hope that
this important challenge will be met in reasonable time, possibly based on
combination of human ingenuity and computer search,\footnote{Computer search has already
helped 
the authors  of \cite{BCLR79}, \cite{Bodrato:2010:SMM}, and \cite{VW14} 
in devising their algorithms.}
and then significant 
impact
on  present day computations should be expected, 
whereas reaching
the exponent 2 for MM of infeasible sizes per se would hardly make such an impact. 

In view of microscopic progress in the decrease of the exponent of 
infeasible MM, the present day research directions towards that goal seem 
to lead to various dead ends,
and any potential progress  shall most likely rely 
on radically new insights, ideas and techniques,
such as aggregation
and mapping the input and output of MM
to another dimension (cf. \cite{P72}, \cite{P82},  \cite{P84} and \cite{LPS92}).
 
In the area of the implementation of MM,
further progress with  recursive
bilinear algorithms based on fast  $2\times 2$ MM  
is a highly important challenge, and the recent advances 
by Bodrato \cite{Bodrato:2010:SMM} and Cenk and Hasan \cite{CH16} 
are very encouraging, but it would greatly benefit  the field if
the researchers  will not confine themselves to the geocentric 
 viewpoint of 1969, restricted to $2\times 2$-based bilinear recursion, 
and will also explore  heliocentric 
point of view of XXI century, by opening themselves to the benefits of
 trilinear world, aggregation, APA, and possibly some other
new powerful Strassen-free techniques for fast feasible MM, yet to appear.

\section*{Acknowledgments}
Jean-Guillaume Dumas's work is partially supported by \OpenDreamKit.
Victor Pan's work has been supported by 
 NSF Grants CCF--1116736 and
 CCF--1563942 and by PSC CUNY Award  68862--00~46.
Furthermore he is grateful to A. Bostan, I.V. Oseledets and E.E. Tyrtyshnikov
for their pointers to recent works on MM and the bibliography 
 on  quantized tensor decompositions, respectively, to
Franklin Lee, Tayfun Pay, and Liang Zhao for their assistance with creating
Figures~\ref{fig1} and~\ref{fig2}, to Igor Kaporin
for sharing his expertise  on 
various issues of practical MM and providing
 information about his papers
\cite{K99} and \cite{K04}, 
and to Ivo Hedtke for his extensive 
 comments to the preprint \cite{P14a}.


\begin{thebibliography}{hspace{0.5in}}

%
%

\bibitem{AHU74}
A.V. Aho, J.E. Hopcroft, J.D. Ullman, 
{\em The Design and Analysis of Algorithms.}
Addison-Wesley, Reading, MA, 1974.

%
%
\bibitem{AlbrechtBH10}
Martin~R. Albrecht, Gregory~V. Bard, and William Hart.
\newblock Algorithm 898: Efficient multiplication of dense matrices over
  {GF(2)}.
\newblock {\em {ACM} Trans. Math. Softw.}, 37(1), 2010.
\newblock \href {http://dx.doi.org/10.1145/1644001.1644010}
  {\path{doi:10.1145/1644001.1644010}}.

%

\bibitem{AGM97}
N. Alon, Z. Galil, O. Margalit, On the Exponent of the All Pairs Shortest Path Problem, 
{J. of Computer and System Sciences}, {\bf 54,~2}, 255--262. 1997.

%

\bibitem{ASU13}
N. Alon, A. Shpilka, C. Umans, On Sunflowers and Matrix Multiplication.
{\em Computational Complexity}, {\bf 22,~2}, 219--243, 2013. 

%

\bibitem{AFLG14}
A. Ambainis, Y. Filmus, F. Le Gall,
Fast Matrix Multiplication: Limitations of the Laser Method.
{\em Electronic Colloquium on Computational Complexity (ECCC)}, year
2014, paper 154.
\url{http://eccc.hpi-web.de/report/2014/154}.

Available at \href{http://arxiv.org/pdf/1411.5414}{\path{arXiv:}1411.5414}, November 21, 2014.
 

%

\bibitem{AP09}
R.R. Amossen, R. Pagh, Faster Join-projects and Sparse Matrix Multiplications. In
{\em Proceedings of the 12th International Conference on Database Theory}, 
121--126, 2009.

%
%
\bibitem{MR0269546}
V.~L. Arlazarov, E.~A. Dinic, M.~A. Kronrod, and I.~A. Farad{\v{z}}ev.
\newblock The economical construction of the transitive closure of an oriented
  graph.
\newblock {\em Doklady Akademii Nauk SSSR}, 194:487--488, 1970.

%

\bibitem{BBDLS15}
Ballard, G., Benson, A. R., Druinsky, A., Lipshitz, B., Schwartz, O.,
 Improving the numerical stability of fast matrix multiplication algorithms, SIMAX, in print, and
\href{http://arxiv.org/pdf/1507.00687}{\path{arXiv:}1507.00687}, 2015.

%

\bibitem{BCD14}
G. Ballard, E. Carson, J. Demmel,
M. Hoemmen, N. Knight, O. Schwartz,
Communication Lower Bounds and Optimal Algorithms 
for Numerical Linear Algebra.
{\em Acta Numerica}, {\bf 23}, 1--155, 2014.

%

\bibitem{BDHS12}
G. Ballard,  J. Demmel,
O. Holtz, O. Schwartz,
 Graph expansion and communication costs of fast matrix multiplication,
{\em Journal of the ACM (JACM)},
{\bf 59, ~6},  
Article No. 32 
doi>10.1145/2395116.2395121

%

\bibitem{BDKSa}
G. Ballard, A. Druinsky, N. Knight, O. Schwartz,
Hypergraph Partitioning for Sparse Matrix-Matrix Multiplication,
\href{http://arxiv.org/pdf/1603.05627}{\path{arXiv:}1603.05627}.

%

\bibitem{B88}
D.H. Bayley,
Extra High Speed Matrix Multiplication on the Cray-2.
{\em SIAM J. on Scientific and Statistical Computing}, {\bf 9,~3}, 603--607, 1988. 

%

%
%
%
%
%

%

\bibitem{BB15}
 A.R. Benson, G. Ballard,
 A framework for practical parallel fast matrix multiplication. 
In {\em Proceedings of the 20th ACM SIGPLAN Symposium 
on Principles and Practice of Parallel Programming}, 42--53, 
ACM Press, New York, January 2015.

%

\bibitem{B80}
D. Bini, 
Relations Between Exact and Approximate Bilinear Algorithms: Applications.
{\em Calcolo},
{\bf 17,~1}, 87--97, 1980.

%

\bibitem{BCLR79}
D. Bini, M. Capovani, G. Lotti, F.  Romani, 
$O(n^{2.7799})$ Complexity for $n\times n$ Approximate Matrix Multiplication.  
{\em Information  Processing Letters}, {\bf 8,~5}, 234--235, June 1979.

%

\bibitem{BL80}
D. Bini, G. Lotti,  
Stability of Fast Algorithms 
for Matrix Multiplication. 
{\em Numerische Math.}, {\bf 36,~1}, 
63--72, 1980.

%

\bibitem{BP86}
D. Bini, V.Y. Pan,
Polynomial Division and Its Computational Complexity.  
{\em J. Complexity}, {\bf 2,~3}, 179--203, 1986. 

%

\bibitem{BP94}
D. Bini, V.Y. Pan,
{\em Polynomial and Matrix Computations, Volume 1: Fundamental Algorithms}.
Birkh\"auser, Boston, 1994.

%

\bibitem{B00}
M. Bl{\"a}ser,
Lower Bounds for the Multiplicative Complexity of Matrix Multiplication.
{\em J. of Computational Complexity}, {\bf 9,~2}, 73--112, 2000.

%

\bibitem{B99}
M. Bl{\"a}ser,
A $5/2 n^{2}$-Lower Bound for the Multiplicative Complexity 
of $n\times n$ Matrix Multiplication over Arbitrary Fields.
{\em Proc. 40th Ann. IEEE Symp. on Foundations of Computer Science (FOCS 1999)}, 
45--50, IEEE Computer Society Press, Los Alamitos, CA
1999.

%

\bibitem{BCCG16}
J. Blasiak, T. Church, H. Cohn, J. A. Grochow, E. Naslund, W. F. Sawin, C. Umans,
On cap sets and the group-theoretic approach to matrix multiplication,
\href{http://arxiv.org/pdf/1605.06702}{\path{arXiv:}1605.06702}, 2016.

%

\bibitem{Bodrato:2010:SMM}
Marco Bodrato.
\newblock A {Strassen}-like matrix multiplication suited for squaring and higher
  power computation.
\newblock In {\em Proceedings of the 2010 International Symposium on Symbolic
  and Algebraic Computation}, ISSAC '10, pages 273--280, New York, NY, USA,
  2010. ACM.
\newblock \href{http://dx.doi.org/10.1145/1837934.1837987}{\path{doi:10.1145/1837934.1837987}}.

%

%
\bibitem{BooBra09}
Thomas~J. Boothby and Robert~W. Bradshaw.
\newblock Bitslicing and the method of four russians over larger finite fields,
  January 2009.
\newblock \href{http://arxiv.org/pdf/0901.1413}{\path{arXiv:}0901.1413}.

%

\bibitem{BM75}
A. Borodin, I. Munro, 
{\em The Computational Complexity of Algebraic and Numeric
Problems.} American Elsevier, New York, 1975.

%

\bibitem{BJS08}
A. Bostan, C.-P. Jeannerod, E. Schost,
Solving structured linear systems 
with large displacement rank.
{\em Theoretical Computer Science}, 
{\bf 407,~1–3}, 155–-181, 2008. 
 	Proceedings version in 
{\em ISSAC'07}, pp. 33–40, ACM Press, NY, 2007.

%

\bibitem{jgd:2015:bini}
B. Boyer and J.-G. Dumas.
\newblock Matrix multiplication over word-size modular fields using approximate
  formulae.
\newblock {\em ACM Transactions on Mathematical Software}, 42(3):20.1--20.12, 2016.
\newblock \url{https://hal.archives-ouvertes.fr/hal-00987812}.

%

\bibitem{14BDGPS}
B. Boyer, J.-G. Dumas, P. Giorgi, C. Pernet, and
  B.~David Saunders.
\newblock Elements of design for containers and solutions in the linbox
  library.
\newblock In Hoon Hong and Chee Yap, editors, {\em Mathematical Software - ICMS
  2014}, volume 8592 of {\em Lecture Notes in Computer Science}, pages
  654--662. Springer Berlin Heidelberg, 2014.
\newblock \href {http://dx.doi.org/10.1007/978-3-662-44199-2_98}
  {\path{doi:10.1007/978-3-662-44199-2_98}}.

%

\bibitem{BDPZ09}
B. Boyer, J.-G. Dumas,
C. Pernet, W. Zhou,
 Memory Efficient Scheduling of Strassen-
Winograd's Matrix Multiplication Algorithm. 
{\em Proc. Intern. Symposium on Symbolic and Algebraic
Computation (ISSAC 2009)},
 55--62, ACM Press, New York, 2009.

%

\bibitem{BD73}
R.W. Brockett, D. Dobkin,
On Optimal Evaluation of a Set of Bilinear Forms.
{\em Proc. of the 5th Annual Symposium on the Theory of Computing (STOC 1973)},
88--95,
ACM Press, New York, 1973.

%

\bibitem{BD76}
R.W. Brockett, D. Dobkin,
On the Number of Multiplications
Required for a Matrix Multiplication.
{\em SIAM Journal on Computing}, {\bf 5,~4}, 624--628, 1976.
%

\bibitem{BD78}
R.W. Brockett, D. Dobkin,
On Optimal Evaluation of a Set of Bilinear Forms.
{\em Linear Algebra and Its Applications}, 
{\bf 19,~3}, 207--235, 1978.

%

\bibitem{B89}
N.H. Bshouty,
 A Lower Bound for Matrix Multiplication.
{\em  SIAM J. on Computing},  {\bf 18,~4}, 759--765, 1989.

%

\bibitem{B95}
N.H. Bshouty,
 On the Additive Complexity of $2\times 2$ Matrix Multiplication, 
{\em Information Processing Letters}, {\bf 56,~6}, 329--335, 1995.

%

\bibitem{BH74}
J.R. Bunch, J.E. Hopcroft,
Triangular Factorization and Inversion by Fast Matrix Multiplication.
{\em Mathematics of Computation},
{\bf 28,~125}, 231--236, 1974.

%

\bibitem{BCS97}
P. B{\"u}rgisser, M. Clausen, M.A. Shokrollahi,
{\em Algebraic Complexity Theory}.
Springer Verlag, 1997.
%

%

\bibitem{CH16}
M. Cenk, M.A. Hasan. 
On the Arithmetic Complexity of Strassen-Like Matrix Multiplications.
{\em J. of Symbolic Computation}, 2016, in press. doi: 10.1016/j.jsc.2016.07.004.

%

\bibitem{CGG84}%
B.W. Char, K.O. Geddes, G.H. Gonnet,
GCDHEU: Heuristic Polynomial GCD Algorithm Based on Integer GCD Computation.
{\em Proceedings of EUROSAM'84, Lecture Notes in Computer Science}, {\bf 174}, 
285--296, Springer, New York, 1984.

%

\bibitem{CKSU05} 
H. Cohn, R. Kleinberg, B. Szegedy, C. Umans,
Group-theoretic Algorithms for
 Matrix Multiplication. 
{\em Proceedings of the 46th Annual
Symposium on Foundations of Computer Science (FOCS 2005)}, (Pittsburgh, PA),
 379--388, IEEE Computer Society Press,
 2005.

%

\bibitem{CU03} 
H. Cohn, C. Umans,
 A Group-theoretic Approach to
Fast Matrix Multiplication. 
{\em Proceedings of the 44th Annual
Symposium on Foundations of Computer Science (FOCS 2003)}, (Cambridge, MA),
 438--449, IEEE Computer Society Press,
 2003.

%

\bibitem{CU13} 
H. Cohn, C. Umans,
 Fast Matrix Multiplication Using Coherent Configurations. 
{\em Proceedings of the 24th Annual ACM-SIAM Symposium on Discrete Algorithms (SODA 2013)}, 
 1074--1087, 2013.

%

\bibitem{C82} 
D. Coppersmith,
 Rapid Multiplication of Rectangular Matrices.
{\em SIAM Journal on Computing}, {\bf 11,~3}, 467--471, 1982.

%

\bibitem{C97} 
D. Coppersmith,
Rectangular Matrix Multiplication Revisited.
{\em Journal of Complexity}, {\bf 13,~1}, 42--49, 1997.

%

\bibitem{CW82} 
D. Coppersmith, S. Winograd,
On the Asymptotic Complexity of Matrix Multiplication.
{\em SIAM J. on Computing}, {\bf 11,~3}, 472--492, 1982. 
Proc. version in {\em 23rd FOCS} 
(Nashville, TN), 82--90, IEEE Computer Society Press, 1981.

%

\bibitem{CW90} 
D. Coppersmith, S. Winograd,
Matrix Multiplicaton via Arithmetic Progressions.
{\em J. of Symbolic Computations}, {\bf 9,~3}, 251--280, 1990.
Proc. version in {\em 19th ACM Symposium on Theory of Computing (STOC 1987)},
 (New York, NY), 1--6,
ACM Press, New York, NY, 1987. Also Research Report RC 12104, 
{\em IBM T.J. Watson Research Center}, August 1986.

%

\bibitem{DN09}  
P. D'Alberto,  A. Nicolau, 
 Adaptive Winograd's Matrix Multiplication.
{\em ACM Transactions on Mathematical Software}, {\bf 36,~1}, paper 3, 2009.

%

\bibitem{DS13}  
 A.M. Davie, A.J. Stothers, Improved Bound for Complexity of Matrix Multiplication.
{\em Proceedings of the Royal Society of Edinburgh}, {\bf 143A}, 351--370, 2013.

%

\bibitem{dG78} 
H.F. de Groot, 
On Varieties of Optimal Algorithms for the 
Computation of Bilinear Mappings. {\it Theoretical Computer Science}, 
{\bf 7,~2}, 127--148, 1978.

%

\bibitem{DI00}
C. Demetrescu, G.F. Italiano, Fully Dynamic Transitive Closure: Breaking Through 
the $O(n^2)$ Barrier. In
{\em Proceedings of the 41st Annual Symposium on Foundations of Computer Science (FOCS 2000)}, 
381--389, 2000.

%

\bibitem{DDH07} 
 J. Demmel, I. Dumitriu,
    O. Holtz, 
Fast Linear Algebra Is Stable.
{\em Numerische Mathematik},
 {\bf 108,~1}, 59--91, 2007. 

%

\bibitem{DDHK07}
 J. Demmel, I. Dumitriu, O. Holtz, R. Kleinberg,
Fast Matrix Multiplication Is Stable.
{\em Numerische Mathematik},
 {\bf 106,~2}, 199--224, 2007.

%
%
\bibitem{Dixon:1982:Pad}
John~D. Dixon.
\newblock {Exact solution of linear equations using p-adic expansions.}
\newblock {\em Numerische Mathematik}, 40:137--141, 1982.

%

\bibitem{DHSS94}
 C.C. Douglas, M. Heroux, G. Slishman, 
R.M. Smith,
GEMMW: A Portable Level 3 BLAS Winograd Variant Of 
Strassen's Matrix-Matrix Multiply Algorithm.
{\em J. of Computational Physics},
{\bf 110,~1}, 1--10, 1994.
 
%

\bibitem{DIS11} 
C.-E. Drevet, Md. N. Islam, {\'E}. Schost,
Optimization Techniques for Small Matrix Multiplication. 
{\em Theoretical Computer Science}, {\bf 412,~22}, 2219--2236, 2011.


%

\bibitem{DP09}  
R. Duan, S. Pettie, Fast Algorithms for  $(\max-\min)$ Matrix Multiplication
and Bottleneck Shortest Paths, 
{\em Proceedings of the 15th Annual ACM-SIAM Symposium on
Discrete Algorithms (SODA 2009)}, 384--391, 2009.

%
%

\bibitem{MR2500374}
J.-G. Dumas.
\newblock Q-adic transform revisited.
\newblock In {\em Proceedings of the 2008 {I}nternational {S}ymposium on
  {S}ymbolic and {A}lgebraic {C}omputation}, pages 63--69, New York, 2008. ACM.
\newblock \href {http://dx.doi.org/10.1145/1390768.1390780}
  {\path{doi:10.1145/1390768.1390780}}.

\bibitem{DumFouSal11}
J.-G. Dumas, Laurent Fousse, and Bruno Salvy.
\newblock Simultaneous modular reduction and {Kronecker} substitution for small
  finite fields.
\newblock {\em Journal of Symbolic Computation}, 46(7):823 -- 840, 2011.
\newblock Special Issue in Honour of Keith Geddes on his 60th Birthday.
\newblock \href {http://dx.doi.org/10.1016/j.jsc.2010.08.015}
  {\path{doi:10.1016/j.jsc.2010.08.015}}.

\bibitem{Dumas:2002:icms}
J.-G. Dumas, T. Gautier, M. Giesbrecht, P. Giorgi, B. Hovinen,
E. Kaltofen, B. D. Saunders, W. J. Turner and G. Villard.
\newblock {LinBox}: A Generic Library for Exact Linear Algebra.
\newblock In {\em {ICMS}'2002, Proceedings of the 2002 International Congress of
  Mathematical Software}, pages 40--50, Beijing, China.
\newblock
\url{http://ljk.imag.fr/membres/Jean-Guillaume.Dumas/Publications/icms.pdf}.

\bibitem{jgd:2002:fflas}
J.-G. Dumas, T. Gautier, and C. Pernet.
\newblock Finite field linear algebra subroutines.
\newblock In Teo Mora, editor, {\em {ISSAC}'2002, Proceedings of the 2002 ACM
  International Symposium on Symbolic and Algebraic Computation, Lille,
  France}, pages 63--74. ACM Press, New York, July 2002.
\newblock
  \url{http://ljk.imag.fr/membres/J.-G..Dumas/Publications/Field_blas.pdf}.

\bibitem{Dumas:2015:pfgemm}
J.-G. Dumas, T. Gautier, C. Pernet, J.-L. Roch, and
  Z. Sultan.
\newblock Recursion based parallelization of exact dense linear algebra
  routines for {Gaussian} elimination.
\newblock {\em Parallel Computing}, 57:235--249, 2016.
\newblock \url{http://hal.archives-ouvertes.fr/hal-01084238}.

\bibitem{Dumas:2014:parPLUQ}
J.-G. Dumas, T. Gautier, C. Pernet, and Z. Sultan.
\newblock Parallel computation of echelon forms.
\newblock In {\em {Euro-Par 2014}, Proceedings of the 20th international
  conference on parallel processing, Porto, Portugal}, volume 8632 of {\em
  Lecture Notes in Computer Science}, pages 499--510, August 2014.
\newblock \url{http://hal.archives-ouvertes.fr/hal-00947013}.

\bibitem{DGP04}
 J.-G. Dumas, P. Giorgi, C. Pernet,
FFPACK: Finite Field Linear Algebra
Package.
{\em Proc. Intern. Symposium on Symbolic and Algebraic
Computation (ISSAC 2004)},
Jaime Gutierrez, editor, 119--126, ACM Press, New York, 2004.

\bibitem{DGP08}
J.-G. Dumas, P. Giorgi, and C. Pernet.
\newblock Dense linear algebra over prime fields.
\newblock {\em ACM Transactions on Mathematical Software}, 35(3):1--42,
  November 2008.
\newblock \url{http://hal.archives-ouvertes.fr/hal-00018223}.

\bibitem{Dumas:2012:HFF}
J.-G. Dumas and C. Pernet.
\newblock Computational linear algebra over finite fields.
\newblock In Daniel Panario and Gary~L. Mullen, editors, {\em Handbook of
  Finite Fields}. Chapman \& Hall/CRC, 2012.
\newblock \url{http://hal.archives-ouvertes.fr/hal-00688254}.

\bibitem{DPS16}
J.-G. Dumas, C. Pernet, Z. Sultan,
Fast Computation of the Rank Profile Matrix and the Generalized Bruhat
Decomposition, {\em J. of Symbolic Computation}, 2016.
Proc. version in {ISSAC 2015}.

%
%

\bibitem{F72}
C.M. Fiduccia,
On Obtaining Upper Bound on the Complexity of
Matrix Multiplication.
In {\em Analytical Complexity of Computations}
 (edited by R.E. Miller, J. W. Thatcher, J. D. Bonlinger),
in  the {\em the IBM Research Symposia Series}, 
pp. 31--40, Plenum Press, NY, 1972. 

%

\bibitem{F72a}
C.M. Fiduccia,
Polynomial Evaluation via the Division Algorithm: 
The Fast Fourier Transform Revisited.
{\em Proc. 4th Annual ACM Symposium 
on Theory of Computing (STOC 1972)},
88--93, ACM Press, New York, 1972. 

%

\bibitem{FP74}
 M.J. Fischer,  M.S. Paterson,  
String-Matching and
Other Products.
\textit{SIAM--AMS Proc.}, {\bf 7}, 113--125, 1974.

%

\bibitem{F74}
P.C. Fischer,
Further Schemes for Combining Matrix Algorithms.
{\em Proceedings of the 2nd Colloquium on Automata, Languages and Programming},
{\em Lecture Notes in Computer Science}, {\bf 14}, 
428--436, 
Springer-Verlag, London, UK, 1974.

%

\bibitem{F09}
M. F{\"u}rer,
Faster Integer Multiplication.
{\em SIAM J. on Computing}, {\bf 39,~3}, 979--1005, 2009. 

%

\bibitem{GG13}
J. von zur Gathen, J. Gerhard (2013). 
{\em Modern Computer Algebra}. Cambridge University Press, Cambridge, UK, 
third edition,
2013.

%

\bibitem{GL13}
G.H. Golub, C.F. Van Loan,
{\em Matrix Computations}.
Johns Hopkins University Press, Baltimore, Maryland, 2013 (4th addition).

%

\bibitem{GKT13}
L. Grasedyck, D. Kressner, C. Tobler, 
A Literature Survey of Low-rank Tensor Approximation Techniques.
{\em GAMM-Mitteilungen},
{\bf 36}, {\bf 1}, 53--78, 2013.

%

\bibitem{H90}
N.J. Higham,
Exploiting Fast Matrix Multiplication within Level 3 BLAS.
{\em ACM Trans. on Math. Software}, {\bf 16,~4}, 352--368, 1990.

%
%
\bibitem{HoevenLQ14}
Joris~{van der} Hoeven, Gr{\'{e}}goire Lecerf, and Guillaume Quintin.
\newblock Modular {SIMD} arithmetic in mathemagix.
\newblock ACM Transactions on Mathematical Software, 43(1):5, August 2016.
\newblock \href{http://arxiv.org/abs/1407.3383}{\path{arXiv:}1407.3383}.

\bibitem{HongKung81}
J.~W. Hong and H.~T. Kung.
\newblock {I/O} complexity: The red-blue pebble game.
\newblock In {\em Proc. 14th {STOC}}, pages 326--333, New York, NY, USA, 1981.
  ACM.
\newblock \href {http://dx.doi.org/10.1145_800076.802486}
  {\path{doi:10.1145_800076.802486}}.

%

\bibitem{HK69}
J.E. Hopcroft, L.R. Kerr,
 Some Techniques for Proving Certain Simple Programs Optimal. 
{\em Proceedings of the Tenth Annual Symposium on Switching and Automata Theory},
 36--45, IEEE Computer Society Press, 1969.

%

\bibitem{HK71}
J.E. Hopcroft, L.R. Kerr, 
On Minimizing the Number of Multiplications Necessary for Matrix Multiplication.
{\em SIAM J. on Applied Math.}, {\bf 20,~1}, 30--36, 1971.

%

\bibitem{HM73}
J.E. Hopcroft, J. Musinski,
Duality Applied to Matrix Multiplication and Other
Bilinear Forms. {\em SIAM Journal on Computing}, {\bf 2,~3},
159--173, 1973.

%

\bibitem{HP98}
X. Huang, V.Y. Pan,
Fast Rectangular Matrix Multiplication and Applications.
{\em Journal of Complexity}, {\bf 14,~ 2}, 257--299, 1998.
Proc. version in  
{\em Proc. Annual ACM International Symposium on Parallel Algebraic and Symbolic Computation (PASCO'97)}, 11--23, ACM Press, New York, 1997.

%
\bibitem{HSHG16}
J. Huang, T.M. Smith, G.M. Henry, R.A. van de Geijn,
Strassen's Algorithm Reloaded.
{\em Proceedings of the International Conference for High Performance
Computing, Networking, Storage and Analysis}, {SC '16},
%
59:1--59:12, IEEE Press, Piscataway, NJ, USA, 2016.
\url{http://www.computer.org/csdl/proceedings/sc/2016/8815/00/8815a690.pdf}
%
\bibitem{HussLederman:1996:ISA}
Steven Huss-Lederman, Elaine~M. Jacobson, Jeremy~R. Johnson, Anna Tsao, and
  Thomas Turnbull.
\newblock Implementation of {Strassen}'s algorithm for matrix multiplication.
\newblock In {ACM}, editor, {\em Supercomputing '96 Conference Proceedings:
  November 17--22, Pittsburgh, {PA}}, New York, NY 10036, USA and 1109 Spring
  Street, Suite 300, Silver Spring, MD 20910, USA, 1996. ACM Press and IEEE
  Computer Society Press.
\newblock \href {http://dx.doi.org/10.1145/369028.369096}
  {\path{doi:10.1145/369028.369096}}.

\bibitem{Ibarra:1982:LSP}
Oscar~H. Ibarra, Shlomo Moran, and Roger Hui.
\newblock A generalization of the fast {LUP} matrix decomposition algorithm and
  applications.
\newblock {\em Journal of Algorithms}, 3(1):45--56, March 1982.
\newblock \href {http://dx.doi.org/10.1016/0196-6774(82)90007-4}
  {\path{doi:10.1016/0196-6774(82)90007-4}}.

\bibitem{Jeannerod201346}
C.-P. Jeannerod, C. Pernet, and A. Storjohann.
\newblock Rank-profile revealing {Gaussian} elimination and the {CUP} matrix
  decomposition.
\newblock {\em Journal of Symbolic Computation}, 56(0):46 -- 68, 2013.
\newblock %
%
\href{http://dx.doi.org/10.1016/j.jsc.2013.04.004}{\path{doi:10.1016/j.jsc.2013.04.004}}.

\bibitem{JML86}
R. W. Johnson, A. M. McLoughlin, 
Noncommutative Bilinear Algorithms
for $3\times 3$ Matrix Multiplication, 
{\em SIAM J. on Computing}, {\bf 15,~2}, 
595--603, 1986.

%

\bibitem{Kaltofen:2011:EACM}
E. Kaltofen and A. Storjohann.
\newblock {\em Encyclopedia of Applied and Computational Mathematics}, Chapter
  ``Complexity of computational problems in exact linear algebra",
227--233.
\newblock Springer, November 2015.
\newblock
\href{http://dx.doi.org/10.1007/978-3-540-70529-1_173}{\path{doi:10.1007/978-3-540-70529-1_173}}.
%

%
%
\bibitem{KSV06}
H. Kaplan, M. Sharir, E. Verbin, 
Colored Intersection Searching via Sparse
 Rectangular Matrix Multiplication. In 
{\em Proceedings of the 22nd ACM Symposium on Computational Geometry}, 52--60, 2006.

%

\bibitem{K99}
I. Kaporin,
A Practical Algorithm for Faster Matrix Multiplication.
{\em Numerical Linear Algebra with Applications}, {\bf 6,~8}, 687--700, 1999.

%

\bibitem{K04}
I. Kaporin,
The Aggregation and Cancellation Techniques as a 
Practical Tool for Faster Matrix Multiplication. 
{\it Theoretical Computer Science}, 
{\bf 315}, {\bf 2--3}, 469--510, 2004.

%

\bibitem{KZHP08}
 S. Ke, B. Zeng, W. Han,
V. Y. Pan, Fast Rectangular Matrix Multiplication and Some
Applications.
{\em Science in China, Series A: Mathematics}, {\bf 51,~3}, 389--406, 2008.

%
\bibitem{Keller-Gehrig:1985:FAC}
Walter Keller-Gehrig.
\newblock Fast algorithms for the characteristic polynomial.
\newblock {\em Theor. Comput. Sci.}, 36(2-3):309--317, June 1985.
\newblock \url{http://dl.acm.org/citation.cfm?id=3929.3939}.

%

\bibitem{K11}
B.N. Khoromskij,
$O(d\log N)$ Quantics Approximation of $N$-$d$ Tensors in High-dimensional 
Numerical Modeling.
{\em Constructive Approximation},
{\bf 34}, {\bf 2}, 257--280,
2011.

%

\bibitem{K97}
D.E. Knuth,
{\em The Art of Computer Programming: Volume 2, Seminumerical Algorithms}.
Addison-Wesley, Reading, Massachusetts, 1969 (first edition),
 1981 (second edition), 1997 (third edition).

%

\bibitem{KB09}
T.G. Kolda, B.W. Bader, 
Tensor Decompositions and Applications.
{\em SIAM Review}, {\bf 51,~3}, 455--500, 2009.

%

\bibitem{LPS92}
J. Laderman, V.Y. Pan, H.X. Sha, 
On Practical Algorithms for Accelerated Matrix Multiplication.
{\em Linear Algebra and Its Applications}, 
{\bf 162--164}, 557--588, 1992.

%

%
%
%
%
%
%
%

%

\bibitem{L14}
J.M. Landsberg,  
New Lower Bound for the Rank of Matrix Multiplication.
{\em SIAM J. on Computing}, {\bf 43,~1}, 144--149, 2014.

%

\bibitem{LG12}
F. Le Gall,
 Faster Algorithms for Rectangular Matrix Multiplication. 
{\em Proceedings of the 53rd
Annual IEEE Symposium on Foundations of Computer Science (FOCS 2012)},
 514--523,  IEEE Computer Society Press, 2012.

%

\bibitem{LG14}
F. Le Gall,
Powers of Tensors and Fast Matrix Multiplication.
{\em Proceedings of the 39th International Symposium on Symbolic and Algebraic Computation (ISSAC 2014)}, 
296--303, ACM Press, New York, 2014.

%

\bibitem{L02}
L. Lee,
 Fast Context-free Grammar Parsing Requires Fast Boolean Matrix Multiplication. 
{\em Journal of the ACM (JACM)}, {\bf  49,~1}, 1--15, 2002.

%

\bibitem{LR83}
 G. Lotti, F.  Romani,
On the Asymptotic Complexity of Rectangular Matrix Multiplication.
{\em Theoretical Computer Science}, {\bf 23}, 171--185, 1983.

%

%
%
%
%

%

\bibitem{MR14}
A. Massarenti, E. Raviolo,
Corrigendum to "The rank of
 $n\times n$
matrix multiplication is at least $3n^2-2\sqrt 2n^{3/2}-3n$"  
[{\em Linear
Algebra and its Applications}, {\bf 438,~11} (2013) 4500--4509].
{\em Linear
Algebra and its Applications}, 
{\bf 445}, 369--371, 2014.

%

\bibitem{MP80}
W. L. Miranker, V. Y. Pan,
Methods of Aggregations, 
{\em Linear Algebra and Its Applications}, 
{\bf 29}, 231--257, 1980.          

%

\bibitem{M55}
T.S. Motzkin,
Evaluation of Polynomials and Evaluation of Rational Functions.
{\em Bull. of Amer. Math. Society}, {\bf 61,~2}, 163, 1955.

%

\bibitem{N16}
V. Neiger,
Fast computation of shifted Popov forms of polynomial 
matrices via systems of modular polynomial equations.
{\em Proceedings of the International Symposium on Symbolic and Algebraic 
Computation (ISSAC'16)}, 365--372, ACM Press, New York, 2016.

%

\bibitem{NS16}
A. Neumaier, D. Stehl\'e,
Faster LLL-type reduction of lattice bases, 
{\em Proceedings of the International Symposium on Symbolic and Algebraic 
Computation (ISSAC'16)}, 373--380, ACM Press, New York, 2016.

%

\bibitem{NS16a}
J.S.R. Nielsen,  A. Storjohann,
Algorithms for Simultaneous Pad\'e Approximation,
{\em Proceedings of the International Symposium on Symbolic and Algebraic 
Computation (ISSAC'16)}, 405--412, ACM Press, New York, 2016.

%

\bibitem{O09} 
I.V. Oseledets, 
Approximation of Matrices with Logarithmic Number of Parameters. 
{\em Dokl. Math.}, {\bf 80,~2}, 653--654, 2009.

%

\bibitem{O10} 
I.V. Oseledets, 
Approximation of $2^d \times 2^d$ Matrices Using Tensor Decomposition.
{\em SIAM J. on Matrix Analysis and Applications},
{\bf 31}, {\bf 4}, 2130--2145, 2010.

%

\bibitem{OT10} 
I.V. Oseledets, E.E. Tyrtyshnikov,
TT-cross Approximation for Multidimensional Arrays.
{\em Linear Algebra Appls.} {\bf 432,~1}, 70--88, 2010.


%

\bibitem{OT11} 
I.V. Oseledets, E.E. Tyrtyshnikov,
Algebraic Wavelet Transform via Quantics Tensor Train Decomposition.
{\em SIAM J. Scientific Computing}, {\bf 33,~3}, 1315--1328, 2011.

%

\bibitem{O54} 
A.M. Ostrowski,
On Two Problems in Absract Algebra Connected with Horner's Rule.
In the {\em Studies Presented to R. von Mises}, 40--48,
Academic Press, New York, 1954. 

%

\bibitem{P66}
V.Y. Pan, 
On Methods of Computing the Values of Polynomials.
{\em Uspekhi Matematicheskikh Nauk},
{\bf 21}, {\bf 1(127)}, 103--134, 1966.
[Transl. {\em Russian Mathematical Surveys},
{\bf 21, 1(127)}, 105--137, 1966.]

%

\bibitem{P72}
V.Y. Pan,
On Schemes for the Evaluation of Products and Inverses of Matrices (in Russian). 
{\em Uspekhi Matematicheskikh Nauk}, {\bf 27}, {\bf 5} {\bf (167)}, 249--250, 1972.

%

\bibitem{P78}
V.Y. Pan,
Strassen's Algorithm Is Not Optimal. Trilinear Technique of Aggregating for  
Fast Matrix Multiplication. {\em Proc. the 19th Annual IEEE Symposium on 
Foundations of Computer Science (FOCS'78)}, 
166--176, IEEE Computer Society Press, Long Beach, California, 1978.

%

\bibitem{P79}
V.Y. Pan,
Fields Extension and Trilinear Aggregating, 
Uniting and Canceling for the Acceleration of Matrix Multiplication. 
{\em Proceedings of the 20th Annual IEEE Symposium on Foundations 
of Computer Science (FOCS'79)},
28--38, IEEE Computer Society Press, Long Beach, California, 1979. 

%

\bibitem{P80}
V.Y. Pan,
New Fast Algorithms for Matrix Operations. 
{\em SlAM J. on Computing}, {\bf 9,~2}, 321--342, 1980,
and Research Report RC 7555, {\em IBM T.J. Watson Research Center}, February 1979. 

%

\bibitem{P80a}
V.Y. Pan,
The Bit-Operation Complexity of the Convolution of Vectors and of the DFT.
Technical report 80-6, Computer Science Dept., SUNY, Albany, NY, 1980.
(Abstract in Bulletin of EATCS, {\bf 14}, page 95, 1981.)
 
%

\bibitem{P81}
V.Y. Pan,
New Combinations of Methods for the Acceleration of Matrix Multiplications.
{\em Computers and Mathematics (with Applications)}, {\bf 7,~1}, 73--125, 1981.

%

\bibitem{P82}
V.Y. Pan,  Trilinear Aggregating with Implicit Canceling for a New Acceleration of Matrix Multiplication.
{\em Computers and Mathematics (with Applications)}, {\bf 8,~1}, 23--34, 1982.

%

\bibitem{P84}
V.Y. Pan, How Can We Speed up Matrix Multiplication?
{\em SIAM Review,} {\bf 26}, {\bf 3}, 393--415, 1984.

%

\bibitem{P84a}
V.Y. Pan,
Trilinear Aggregating and the Recent Progress in the Asymptotic 
Acceleration of Matrix Operations. 
{\em Theoretical Computer Science}, {\bf 33,~1}, 117--138, 1984. 

%

\bibitem{P84b}
V.Y. Pan, 
{\em How to Multiply Matrices Faster.}
{\em Lecture Notes in Computer Science}, 
{\bf 179}, Springer, Berlin, 1984.

%

%
%
%
%

%

\bibitem{P14}
V.Y. Pan, 
Better Late Than Never: Filling a Void in the History
  of Fast Matrix Multiplication and Tensor Decompositions,
\href{http://arxiv.org/pdf/1411.1972}{\path{arXiv:}1411.1972}, November 2014.

%

\bibitem{P14a}
V.Y. Pan, 
Matrix Multiplication, Trilinear Decompositions, APA Algorithms, and Summation,
\href{http://arxiv.org/pdf/1412.1145}{\path{arXiv:}1412.1145} CS, Submitted December 3, 2014, revised February 5, 2015.

%

\bibitem{pernet:tel-01094212}
C. Pernet.
\newblock {\em {High Performance and Reliable Algebraic Computing}}.
\newblock Habilitation {\`a} diriger des recherches, {Universit{\'e} Joseph
  Fourier, Grenoble 1}, November 2014.
\newblock \url{https://tel.archives-ouvertes.fr/tel-01094212}.

%

\bibitem{P16}
C. Pernet,
Computing with Quasiseparable Matrices,
{\em Proceedings of the International Symposium on Symbolic and Algebraic 
Computation (ISSAC'16)}, 389--396, ACM Press, New York, 2016.

%

\bibitem{Pernet:2007:FAC}
C. Pernet and A. Storjohann.
\newblock Faster algorithms for the characteristic polynomial.
\newblock In {\em Proceedings of the 2007 International Symposium on Symbolic
  and Algebraic Computation}, ISSAC '07, pages 307--314, New York, NY, USA,
  2007. ACM.
\newblock \href{http://dx.doi.org/10.1145/1277548.1277590}{\path{doi:10.1145/1277548.1277590}}.

%

\bibitem{P76}
R. L. Probert,
On the Additive Complexity of Matrix Multiplication.
{\em SIAM J. on Computing}, {\bf 5,~2}, 187--203, 1976.

%

\bibitem{P74}
R.L. Probert,
On the Complexity of Symmetric Computations.
{\em Canadian J. of Information Processing and Operational Res.}, {\bf 12,~1}, 71--86, 1974.

%

\bibitem{RS03}
R. Raz, A. Shpilka,
  Lower Bounds for Matrix Product, in Bounded Depth Circuits with Arbitrary Gates.
  {\em SIAM J. on Computing}, {\bf 32,~2},
  488--513, 2003.

%

\bibitem{R79}
F. Romani, 
private communication at the Oberwolfach Conference in October of 1979.

%

\bibitem{R82}
F. Romani, 
Some Properties of Disjoint Sum of Tensors Related to MM.
{\em SIAM J. on Computing}, {\bf 11,~2}, 263--267, 1982.

%

\bibitem{SM10}
P. Sankowski, M. Mucha,
Fast Dynamic Transitive Closure with Lookahead.
{\em Algorithmica}, {\bf 56,~ 2}, 180--197, 2010.

%

\bibitem{S81}
A. Sch{\"o}nhage,
 Partial and Total Matrix Multiplication.
{\em SIAM J. on Computing}, {\bf 10,~3}, 434--455, 1981. 

%

\bibitem{S82}
A. Sch{\"o}nhage,
Asymptotically Fast Algorithms for the
Numerical Multiplication and
  Division of Polynomials with Complex Coefficients.
\textit{Proc. EUROCAM, Marseille}
  (edited by  J. Calmet),
  \textit{Lecture Notes in Computer Science} \textbf{144}, 
3--15, Springer, Berlin, 1982.

%

\bibitem{SS71}
A. Sch{\"o}nhage, V. Strassen, 
Schnelle Multiplikation gro\ss{}er Zahlen. 
{\em Computing}, {\bf 7,~3--4}, 281--292, 1971.

%

\bibitem{S13}
A.V. Smirnov, 
The Bilinear Complexity and Practical Algorithms for Matrix Multiplication. 
{\em Computational Mathematics and Mathematical Physics}, 
{\bf 53,~12}, 1781--1795 (Pleiades Publishing, Ltd), 2013. 
Original Russian Text 
in {\em Zhurnal Vychislitel'noi Matematiki i Matematicheskoi Fiziki}, 
{\bf 53,~12}, 1970--1984, 2013.

%

\bibitem{S03}
A. Shpilka,
Lower Bounds for Matrix Product.
  {\em SIAM J. on Computing},
{\bf 32,~5}, 1185--1200, 2003.

%

\bibitem{S15}
A. Storjohann,
On the complexity of inverting integer and polynomial matrices,
{\em Computational Complexity}, {\bf 24}, 777--821, 2015.  

%

%
\bibitem{Storjohann:2005:HighOrder}
A. Storjohann.
\newblock The shifted number system for fast linear algebra on integer
  matrices.
\newblock {\em Journal of Complexity}, 21(4):609--650, 2005.
\newblock \href {http://dx.doi.org/10.1016/j.jco.2005.04.002}
  {\path{doi:10.1016/j.jco.2005.04.002}}.

\bibitem{SY15}
 A. Storjohann, S. Yang, 
 A relaxed algorithm for online matrix inversion,
{\em Proceedings of the International Symposium on Symbolic 
and Algebraic Computation (ISSAC'15)},
 339-346, ACM Press, New York, 2015.

%

\bibitem{S10}
A.J. Stothers,
On the Complexity of Matrix
Multiplication. Ph.D. Thesis, University of Edinburgh, 2010.

%

\bibitem{S69}
 V. Strassen, Gaussian Elimination Is Not Optimal. 
{\em Numerische Math.}, {\bf 13}, 354--356, 1969.

%

\bibitem{S72}
V. Strassen,
Evaluation of Rational Functions, in {\em Analytical Complexity of
Computations} (edited by R.E. Miller, J. W. Thatcher, and J. D. Bonlinger), 
pages 1--10, Plenum Press, New York, 1972. 

%

\bibitem{S73}
 V. Strassen, 
Vermeidung von Divisionen.
{\em J. Reine Angew. Math.}, {\bf 1973,~ 264}, 184--202, 1973.

%

\bibitem{S74}
V. Strassen,
Some Results in Algebraic Complexity Theory, in
{\em Proceedings of the International Congress of
Mathematicians}, Vancouver, 1974 (Ralph D. James, editor), Volume {\bf 1}, pages 497--501,
 Canadian Mathematical Society
1974. 


%

\bibitem{S86}
 V. Strassen, The Asymptotic Spectrum of Tensors and
the Exponent of Matrix Multiplication. 
{\em Proc. 27th Ann. Symposium on Foundation of Computer Science}, 
49--54, 1986.

%

\bibitem{T55}
J. Todd,
Motivation for Working in Numerical Analysis.
{\em Communication on Pure and Applied Math.}, {\bf 8,~1}, 97--116, 1955.

%

\bibitem{T03}
E.E. Tyrtyshnikov, Tensor Approximations of Matrices 
Generated by Asymptotically Smooth Functions. 
{\em Mat. Sbornik}, {\bf 194,~6}, 147--160, 2003.

%

\bibitem{VW14}
V. Vassilevska Williams,
Multiplying Matrices Faster than Coppersmith--Winograd.
Version available at 
\url{http://theory.stanford.edu/virgi/matrixmult-f.pdf}, retrieved on
January 30, 2014. Also see
{\em Proc. 44th Annual ACM Symposium on Theory of Computing
(STOC 2012)}, 887--898, ACM Press, New York, 2012.

%

%
%
%
%

%

\bibitem{W70a}
 A. Waksman, 
On Winograd's Algorithm for Inner Products.
{\em IEEE Transactions on Computers}, {\bf C-19,~4}, 360--361, 1970.
 
%

\bibitem{W67}
S. Winograd, 
On the Number of Multiplications Required to Compute Certain Functions.
{\em Proc. of the National Academy of Sciences}, {\bf 58,~5}, 1840--1842, 1967.

%

\bibitem{W68}
S. Winograd, 
A New Algorithm for Inner Product.
{\em IEEE Transaction on Computers},
{\bf C--17,~7}, 693--694, 1968.

%

\bibitem{W70}
   S. Winograd,
On the Number of Multiplications Necessary to Compute Certain Functions.
{\em Communications on Pure and Applied Mathematics},
{\bf 23,~2}, 165--179, 1970.

%

\bibitem{W71}
   S. Winograd,
On  Multiplication of $2\times 2$ Matrices.
{\em Linear Algebra and Its Applications}, 
{\bf 4}, 382--388, 1971.

%

\bibitem{W80}
S. Winograd, 
{\em Arithmetic Complexity of Computations}.
CBMS-NSF Regional Conference Series in Applied Math., {\bf 33},
SIAM, Philadelphia, 1980.

%

\bibitem{Y09}
R. Yuster,
Efficient Algorithms on Sets of Permutations, Dominance, and Real-weighted APSP 
{\em Proceedings of the 15th Annual ACM-SIAM Symposium on
Discrete Algorithms (SODA 2009)}, 384--391, 2009.

%

\bibitem{YZ04}
R. Yuster, U. Zwick, U. Detecting Short Directed Cycles Using Rectangular 
Matrix Multiplication and Dynamic Programming. In
{\em Proceedings of the 15th Annual ACM-SIAM Symposium on
Discrete Algorithms (SODA 2004)}, 254--260, 2004.

%

\bibitem{YZ05}
R. Yuster, U. Zwick, Fast Sparse Matrix Multiplication.
{\em ACM Transactions on Algorithms}, {\bf 1,~1}, 2--13, 2005.

%

\bibitem{YZ05a}
R. Yuster, U. Zwick,
 Answering Distance Queries in Directed Graphs Using Fast Matrix Multiplication. In
{\em Proceedings of 
46th Annual IEEE Symposium on the Foundations of Computer Science (FOCS 2005)},
389--396, IEEE Computer Society Press, 2005.

%

\bibitem{Z02}
U. Zwick. All-pairs Shortest Paths Using Bridging Sets and Rectangular Matrix Multiplication.
{\em Journal of the ACM}, {\bf 49,~3}, 289--317, 2002.

%

\end{thebibliography}
\end{document}